\begin{document}

\title{Optimal strongly conflict-avoiding codes of even length and weight three}

\titlerunning{Optimal SCAC of even length and weight three}        

\author{Yijin Zhang \and Yuan-Hsun Lo \and \\ Wing Shing Wong}

\authorrunning{Y. Zhang et al.} 

\institute{Y. Zhang \at
              School of Electronic and Optical Engineering, Nanjing University of Science and Technology, Nanjing, China \\
              \email{yijin.zhang@gmail.com}
           \and
           Y.-H. Lo \at
              Department of Mathematics, National Taiwan Normal University, Taipei 116, Taiwan\\
              \email{yhlo0830@gmail.com}
           \and
           W. S. Wong \at
              Department of Information Engineering, the Chinese University of Hong Kong, Hong Kong\\
              \email{wswong@ie.cuhk.edu.hk}
}

\date{Received: date / Accepted: date} 


\renewcommand{\labelenumi}{(\roman{enumi})}

\newcommand{\seq}[1]{\mathbf{#1}}

\newcommand{\set}[1]{\mathcal{#1}}
\newcommand{\CAC}{\mathsf{CAC}}
\newcommand{\SCAC}{\mathsf{SCAC}}
\newcommand{\C}{\mathcal{C}}

\maketitle

\begin{abstract}
Strongly conflict-avoiding codes (SCACs) are employed in a slot-asynchronous multiple-access collision channel without feedback to guarantee that each active user can send at least one packet successfully in the worst case within a fixed period of time.
By the assumption that all users are assigned distinct codewords, the number of codewords in an SCAC is equal to the number of potential users that can be supported.
SCACs have different combinatorial structure compared with conflict-avoiding codes (CACs) due to additional collisions incurred by partially overlapped transmissions.
In this paper, we establish upper bounds on the size of SCACs of even length and weight three.
Furthermore, it is shown that some optimal CACs can be used to construct optimal SCACs of weight three.
\keywords{strongly conflict-avoiding code \and conflict-avoiding code \and protocol sequence}
\subclass{94B25 \and 94C30 \and 11A15}
\end{abstract}

\section{Introduction}
\subsection{Motivation}
The collision channel without feedback model \cite{Massey85} is investigated in this paper.
There are total $M$ potential users and at most $k$ users are active at the same time.
{\em Protocol sequences}~\cite{GyorfiVajda93,Nguyen92,SCSW,CRT,Wong07,CIS} are used to provide multiple-access.
Let $x_i = (x_{i,0}, x_{i,1}, \ldots , x_{i,L-1})$ be a binary protocol sequence with length $L$ assigned to user $i$.
Each active user sends its packet to a common sink if and only if the assigned sequence value equals one.
The channel time is partitioned into fixed-length slots and the packet length exactly occupies a slot.
A total overlap of packets occurs if more than one user start their transmission simultaneously; and a partial overlap of packets occurs if one packet starts or ends its transmission within the transmission duration of some other packet.
Any partial or total overlap of packets would incur a collision.
A packet without suffering from any collision is received error-free; otherwise it is assumed to be unrecoverable.
As there is no feedback from the receiver and no cooperation among the users, each user has a relative delay offset.
Let $\delta_i$ be the time offset of user $i$ for $i=1,2,\ldots, M$, measured in time slot duration units.
As introduced in~\cite{Massey85}, there are two different levels of synchronization:
\begin{enumerate}
  \item The channel is {\em slot-synchronized} if all users start transmitting at the slot boundaries, i.e., the time offsets $\delta_1, \delta_2,\ldots,  \delta_M$ are arbitrary integers.
        Collisions will result only when packets totally overlap.
  \item The channel is {\em slot-asynchronous} if all users do not know the slot boundaries of the channel, i.e., the time offsets $\delta_1, \delta_2,\ldots,  \delta_M$ are arbitrary real numbers.
    Some collisions may be incurred by partial overlap of packets.
\end{enumerate}

A set of $M$ binary sequences $\{x_1,x_2,\ldots,x_M\}$ is said to be an {\em $(M, k, \omega, L, \sigma)$ protocol sequence set}~\cite{Nguyen92} if any sequence is of length $L$, Hamming weight $\omega$, and has the property that each active user can transmit at least $\sigma$ packets successfully in a period of $L$ slots in the worst case.
When $\sigma \geq 1$, we say this sequence set enjoys the \emph{nonblocking property}.
Obviously, whether $\sigma \geq 1$ or not highly depends on the assumption of synchronization.

Let $\set{I}$ be a codeword of weight~$\omega$ over $\mathbb{Z}_L$.
Since a binary sequence of length $L$ can be identified with a subset of $\mathbb{Z}_L$ representing the indices of nonzero positions, a set of $M$ protocol sequences can be viewed as a code consisting of $M$ codewords.
In order to provide the nonblocking property at different levels of synchronization, the following two classes of codes have been studied as protocol sequences extensively in the literature.
\begin{enumerate}
  \item  An $(M, k, \omega, L, \sigma)$ protocol sequence set is a {\em conflict-avoiding code} (CAC)~\cite{Fu,Fu_Lo_Shum_2012,JMJTT,Levenshtein07,LT05,Momihara07b,Momihara07,Mishima09,ShumWong09,CAC10} if $k=\omega$ and $\sigma = 1$ in the slot-synchronized case.
  \item An $(M, k, \omega, L, \sigma)$ protocol sequence set is a {\em strongly conflict-avoiding code} (SCAC)~\cite{SCAC} if $k=\omega$ and $\sigma = 1$ in the slot-asynchronous case.
      SCACs consider a more practical channel model.
\end{enumerate}
As $k=w$, both CACs and SCACs require that there is at most one collision between any two distinct sequences for any relative delay offsets.
However, collisions incurred by partially overlapped transmissions need to be additionally considered in the design of SCACs.
This yields different combinatorial structures of CACs and SCACs, as argued in~\cite{SCAC}.
Before presenting them accordingly in Section~\ref{sec:preliminary}, we provide an example first as the following.
\begin{align*}
x_1 &= (1,1,1,0,0,0,0,0,0,0,0,0) \\
x_2 &= (1,0,0,1,0,0,1,0,0,0,0,0) \\
x_3 &= (1,0,0,0,1,0,0,0,1,0,0,0)
\end{align*}
$\{x_1,x_2,x_3\}$ forms a CAC with $M=3$ and $L=12$.
However, it is not an SCAC.
For $\delta_1=1$, $\delta_2=1.5$ and $\delta_3=3$, all packets from user 1 are lost due to two partial overlappings and one complete overlapping, as illustrated in Fig.~\ref{fig:packet}.

\begin{figure}
\begin{center}
  \includegraphics[width=3.2in]{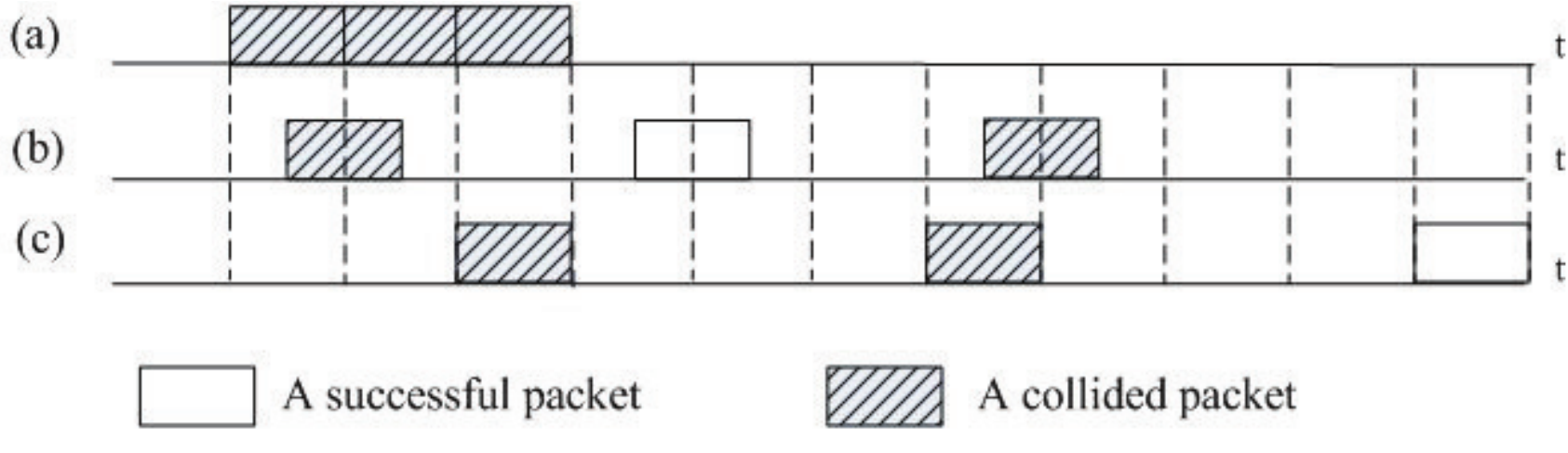}
\end{center}
\caption{(a) Packets from user 1, (b) packets from user 2, (c) packets from user 3.}
\label{fig:packet}
\end{figure}

In the study of CAC or SCAC, the main theme is to find as many sequences (or, codewords) as possible, for a given pair of integers $L$ and $w$.
If a CAC or SCAC enjoys the maximal size of codewords, then this code is said to be {\em optimal}.

Asymptotically optimal and optimal CACs for general weights were investigated in~\cite{ShumWong09,CAC10}.
Based on previously known constructions of CACs, asymptotically optimal SCACs are derived in \cite{SCAC} under the assumption that each codeword possesses a special structure, called equi-difference.
Moreover, optimal CACs of weight three are investigated in \cite{Fu,Fu_Lo_Shum_2012,JMJTT,Levenshtein07,LT05,Momihara07b,Mishima09}.
The code size spectrum of optimal CACs with even length and weight three has been completely settled by these studies. 
However, relatively little is known about the code size of optimal SCACs.
In this paper, we are going to find optimal SCACs of even length and weight three, which can be applied to more realistic scenarios.

The rest of this paper is organized as follows.
In Section~\ref{sec:preliminary}, we introduce some relevant definitions and relative known results in the literatures, as well as present a necessary condition for the existence of an SCAC.
Several useful properties of codewords in an SCAC are given in Section~\ref{sec:property}.
New upper bounds on the size of SCACs are derived in Section~\ref{sec:bound}.
In Section~\ref{sec:optimal} we prove that some upper bounds in Section~\ref{sec:bound} are indeed tight in several cases.
Finally, conclusions are given in Section~\ref{sec:conclusion}.

\section{Preliminaries} \label{sec:preliminary}
\subsection{Definitions and notations}

Let $\mathbb{Z}_L = \{0,1,\ldots,L-1\}$ denote the ring of residues modulo $L$ and $\mathcal{P}(L,\omega)$ denote the set of all $\omega$-subsets of $\mathbb{Z}_L$.
Each element $x \in \mathcal{P}(L,\omega)$ can be identified with a binary sequence of length $L$ and weight $\omega$ representing the indices of the nonzero positions.
Therefore, a CAC or SCAC of length $L$ and weight $\omega$ can be viewed as a subset of $\mathcal{P}(L,\omega)$.
We call elements in $\mathcal{P}(L,\omega)$ \emph{codewords}.

For a codeword $\set{I}\in\mathcal{P}(L,\omega)$, let $d(\set{I}) :=\{a-b \text{ (mod } L):\, a,b\in \set{I}\}$
denote the \emph{set of differences} between pairs of elements in $\set{I}$, and let $d^*(\set{I}):=d(\set{I}) \setminus \{0\}$
denote the \emph{set of non-zero differences} in $\set{I}$.
Then a formal definition of a CAC can be given as follows.

\begin{definition}
A CAC of length $L$ and weight $\omega$ is a subset $\mathcal{C} = \{\set{I}_1,...,\set{I}_M\} \subset \mathcal{P}(L,\omega)$ satisfying the condition that for all $j\neq k$,
\begin{equation}
 d^*(\set{I}_j) \cap d^*(\set{I}_k) = \emptyset. \label{eq:CAC}
\end{equation}
\end{definition}

For given $L$ and $w$, let $\CAC(L,\omega)$ denote the class of all CACs of length $L$ and weight $w$.
The maximum size of a code in $\CAC(L,\omega)$ is denoted by $M(L,\omega)$.
A code $\set{C}\in\CAC(L,\omega)$ is said to be \emph{optimal} if $|\set{C}|=M(L,\omega)$.

\smallskip
Given two subsets $\set{A},\set{B}\subset\mathbb{Z}_L$, let $\set{A}\pm \set{B} := \{a\pm b \text{ (mod }L):\,a\in\set{A},b\in\set{B}\}$.
Then an SCAC can also be defined by means of $d$ and $d^*$.

\begin{definition}
An SCAC of length $L$ and weight $\omega$ is a subset $\mathcal{C} = \{\set{I}_1,...,\set{I}_M\} \subset \mathcal{P}(L,\omega)$ satisfying the condition that for all $j\neq k$,
\begin{equation}
\Big(d^*(\set{I}_j) \cup (d^*(\set{I}_j)+\{1\})\cup (d^*(\set{I}_j)-\{1\})\Big) \cap d(\set{I}_k) = \emptyset. \label{eq:SCAC}
\end{equation}
This definition captures all the possibilities of partial collisions in slot asynchronous systems.
\end{definition}

Similarly, for given $L$ and $w$, let $\SCAC(L,\omega)$ denote the class of all SCACs of length $L$ and weight $w$.
The maximum size of a code in $\SCAC(L,\omega)$ is denoted by $M_S(L,\omega)$.
A code $\set{C}\in\SCAC(L,\omega)$ is said to be \emph{optimal} if $|\set{C}|=M_S(L,\omega)$.

Given a code $\set{C}$ in $\CAC(L,\omega)$ or $\SCAC(L,\omega)$, a codeword $\set{I}\in\set{C}$ is called \emph{equi-difference} if all its elements form an arithmetic progression in $\mathbb{Z}_L$, i.e., $\set{I} = \{0, g, 2g, \ldots, (\omega-1)g\}$ for some $g\in\mathbb{Z}_L$, where the product $jg$ is calculated modulo $L$.
The element $g$ is called the \emph{generator} of $\set{I}$.
Without loss of generalization, we assume $g\leq L/2$ in this paper.
A code is called \emph{equi-difference} if it entirely consists of equi-￼difference codewords.
We use $M^e(L,\omega)$ (or $M_S^e(L,\omega)$) to denote the maximum code size among all equi-difference CACs (or SCACs) of length $L$ and weight $w$.

For a codeword $\set{I}\in\mathcal{P}(L,\omega)$ define the \emph{set of shifted non-zero differences} of $\set{I}$ by $d^+(\set{I}) := d^*(\set{I}) + \{0,1\}$.
Then the definition of an SCAC can be rewritten as follows.

\begin{proposition}[\cite{SCAC}]
$\C=\{\set{I}_1,\set{I}_2,\ldots, \set{I}_M\}\in\SCAC(L,\omega)$ if and only if
\begin{enumerate}
  \item $\{1,L-1\}\cap d^*(\set{I}_j)=\emptyset$ for all $j$; and
  \item $ d^+(\set{I}_j)\cap d^+(\set{I}_k) = \emptyset$ for all $j \neq k$.
\end{enumerate}
\label{prop:property}
\end{proposition}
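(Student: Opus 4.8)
The plan is to prove both directions at once, by decomposing the single disjointness condition \eqref{eq:SCAC} for an ordered pair $(j,k)$ according to the splitting $d(\set{I}_k)=d^*(\set{I}_k)\cup\{0\}$, and then matching the resulting pieces against parts (i) and (ii) of the proposition. Throughout I would use that every set-translation $x\mapsto x+c$ is a bijection of $\mathbb{Z}_L$, so that any disjointness statement is preserved under shifting both sets by the same amount.

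First I would isolate the contribution of the singleton $\{0\}$. Since $0\notin d^*(\set{I}_j)$ always, the element $0$ of $d(\set{I}_k)$ can only be met through the two shifted sets in \eqref{eq:SCAC}: one has $0\in d^*(\set{I}_j)+\{1\}$ precisely when $L-1\in d^*(\set{I}_j)$, and $0\in d^*(\set{I}_j)-\{1\}$ precisely when $1\in d^*(\set{I}_j)$. Hence the requirement that the union in \eqref{eq:SCAC} avoid $\{0\}$ holds for every ordered pair exactly when $\{1,L-1\}\cap d^*(\set{I}_j)=\emptyset$, which is part (i). (Here one needs $M\geq 2$, so that each index $j$ occurs as the first coordinate of some admissible pair; for $M=1$ the SCAC condition is vacuous and the equivalence is read accordingly.)

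Next I would treat the contribution of $d^*(\set{I}_k)$, which yields the three requirements that $d^*(\set{I}_j)$, $d^*(\set{I}_j)+\{1\}$ and $d^*(\set{I}_j)-\{1\}$ each be disjoint from $d^*(\set{I}_k)$. The key manoeuvre is to move the negative shift onto the other codeword: translating by $+1$ shows $(d^*(\set{I}_j)-\{1\})\cap d^*(\set{I}_k)=\emptyset$ is equivalent to $d^*(\set{I}_j)\cap(d^*(\set{I}_k)+\{1\})=\emptyset$. I would then expand $d^+(\set{I}_j)\cap d^+(\set{I}_k)$, using $d^+(\set{I})=d^*(\set{I})\cup(d^*(\set{I})+\{1\})$, into its four cross-intersections, observe that $(d^*(\set{I}_j)+\{1\})\cap(d^*(\set{I}_k)+\{1\})=\emptyset$ coincides (after shifting by $-1$) with $d^*(\set{I}_j)\cap d^*(\set{I}_k)=\emptyset$ and is therefore redundant, and conclude that part (ii) is exactly the conjunction of the three requirements above. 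Reading this chain of equivalences in both directions then establishes the proposition.

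I expect no genuine difficulty here, only careful bookkeeping: one must track which of the $\pm1$ shifts sits on $\set{I}_j$ versus $\set{I}_k$, and must recognize that the lone element $0\in d(\set{I}_k)$ is precisely what encodes part (i), while the remainder $d^*(\set{I}_k)$ encodes part (ii). Since every step is shift-invariant set algebra, no further combinatorial structure of the codewords is needed.
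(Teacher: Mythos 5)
Your proof is correct, and there is in fact nothing in this paper to compare it against: Proposition~\ref{prop:property} is imported from \cite{SCAC} without proof, so your argument, which takes \eqref{eq:SCAC} as the definition and derives the two conditions by set algebra, stands on its own. The bookkeeping checks out at every step: writing $d(\set{I}_k)=d^*(\set{I}_k)\uplus\{0\}$, the singleton $\{0\}$ meets the union in \eqref{eq:SCAC} precisely when $L-1\in d^*(\set{I}_j)$ (through the $+1$ shift) or $1\in d^*(\set{I}_j)$ (through the $-1$ shift), which is exactly the negation of (i); and the requirement that $d^*(\set{I}_j)$, $d^*(\set{I}_j)+\{1\}$ and $d^*(\set{I}_j)-\{1\}$ each avoid $d^*(\set{I}_k)$ becomes, after translating the last of these by $+1$, precisely the three non-redundant cross-terms in the expansion of $d^+(\set{I}_j)\cap d^+(\set{I}_k)$, the fourth term $(d^*(\set{I}_j)+\{1\})\cap(d^*(\set{I}_k)+\{1\})$ being a translate of the first. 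Two of your side remarks deserve endorsement: the ordered-versus-unordered issue resolves itself because the three conditions extracted from the single ordered pair $(j,k)$ already form a symmetric system equivalent to (ii); and the $M=1$ caveat is genuine, since under a literal reading of \eqref{eq:SCAC} a lone codeword containing the difference $1$ is vacuously an SCAC yet violates (i), so the equivalence should be read for $M\geq 2$ (or with (i) absorbed into the definition, as effectively done in \cite{SCAC}). Flagging that edge case rather than hiding it is the right call.
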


Proposition~\ref{prop:property} implies directly that for any $\set{C}\in\SCAC(L,\omega)$, the following holds:
\begin{equation}\label{eq:d+_union}
\bigcup_{\set{I}\in\set{C}} d^+(\set{I}) \subseteq \{2,3,\ldots,L-1\}.
\end{equation}

Let $\set{A}$ be a subset of $\mathbb{Z}_L$.
A subset of $\set{A}$ which consists of consecutive integers is called a closed interval.
A closed interval $S$ is maximal if for any other closed interval $T$, either $T\subset S$ or $T\cap S=\emptyset$.
Obviously, $\set{A}$ can be uniquely partitioned into several maximal closed intervals, called \emph{tubes}.
A tube is denoted by $T(x,y)$ if its smallest and largest integer are $x$ and $y$, respectively.
$T(x,y)$ is called \emph{O-rough} if $x$ and $y$ are both odd, \emph{E-rough} if $x$ and $y$ are both even, and \emph{flat} otherwise.

On the other hand, $\{2,3,\ldots,L-1\} \setminus \set{A}$ can also be uniquely partitioned into several maximal closed intervals.
They can be viewed as \emph{gaps} in $\set{A}$.
Note that the elements $0,1$ are not taken into consideration because in what follows, we will focus on $\set{A}$'s which are shifted non-zero difference set of some codeword in an SCAC and thus $0,1\notin\set{A}$ by \eqref{eq:d+_union}.
We denote a gap with the smallest integer $x$ and largest integer $y$ by $G(x,y)$.
Similar to tubes, we also classify gaps into E-rough, O-rough and flat gaps.
Note that it is possible $x=y$ for some gaps but not for tubes of shifted non-zero difference sets.

Assume that $\set{C}$ is an SCAC and $\set{I}_j$ is one of its codewords.
We use $T_j(x,y)$ (resp., $G_j(x,y)$) to emphasize a tube (resp., a gap) in the shifted non-zero difference set $d^+(\set{I}_j)$.
For example, let $\set{I}_1 = \{0,4,7\}$ be one codeword in some code $\set{C}\in\SCAC(26,3)$. Then $d^+(\set{I}_1)=\{3,4,5,7,8,19,20,22,23,24\}$. 
There are one O-rough tube $T_1(3,5)$; two flat tubes $T_1(7,8)$, $T_1(19,20)$; and one E-rough tube $T_1(22,24)$.
On the other hand, there are two E-rough gaps $G_1(2,2), G_1(6,6)$; one flat gap $G_1(9,18)$; and two O-rough gaps $G_1(21,21)$, $G_1(25,25)$.

Now, we define a special gap, called \emph{solitary} gap, in a code.

\begin{definition}
Consider a given SCAC, $\set{C}$, and one of its codewords $\set{I}_j$.
Let $G_j(x,y)$ be a gap in $d^+(\set{I}_j)$ and $T(x',y')$ be a tube in $\bigcup_{\set{I} \in \C} d^+(\set{I})$.
If $x\leq x'$ and $y'\leq y$, then this tube is said to be included in the gap, denoted by $T(x',y') \vartriangleleft G_j(x,y)$.
An E-rough (or O-rough) gap $G_j(x,y)$ is said to be \emph{solitary} if there is no E-rough (or O-rough) tube $T(x',y')$ in $\bigcup_{\set{I} \in \C} d^+(\set{I})$ such that  $T(x',y') \vartriangleleft G_j(x,y)$.
\end{definition}

For example, let $\set{I}_1=\{0,2,4\}$, $\set{I}_2=\{0,6,12\}$ and $\set{I}_3=\{0,9,19\}$ be the three codewords in a code $\set{C}\in\SCAC(28,3)$.
Then it can be checked that $G_3(2,8)$ is solitary.

\subsection{Previously known results} \label{sec:pre_known}
We summarize some previously known deterministic results on CACs and SCACs of weight three in this subsection.

\begin{theorem}[\cite{LT05}]
$M(L,3)=M^e(L,3)=(L-2)/4$ for any $L\equiv 2$ (mod 4).
\label{thm:LT05}
\end{theorem}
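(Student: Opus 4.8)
The plan is to establish the two nontrivial inequalities $M(L,3)\le (L-2)/4$ and $M^e(L,3)\ge (L-2)/4$. Since every equi-difference CAC is in particular a CAC, we have $M^e(L,3)\le M(L,3)$ for free, so these two bounds sandwich the common value to $(L-2)/4$. Throughout I write $L=4m+2$, so that the self-paired difference $L/2=2m+1$ is \emph{odd}, and the nonzero differences in $\{1,\dots,L-1\}$ split, under the pairing $d\leftrightarrow L-d$, into $m+1$ odd unsigned classes $\{1,3,\dots,2m+1\}$ and exactly $m=(L-2)/4$ even unsigned classes $\{2,4,\dots,2m\}$. No even difference is self-paired, since $L-2t=2t$ would force $L\equiv 0\pmod 4$.

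For the upper bound I would count even differences. The key observation is that every weight-three codeword $\set{I}=\{a,b,c\}$ must contain an even nonzero difference: writing the three distances as $d_1=b-a$, $d_2=c-b$ and $d_3=c-a=d_1+d_2$, reduction modulo $L$ preserves parity because $L$ is even, so $d_1,d_2,d_3$ cannot all be odd. Hence each $d^*(\set{I})$ meets at least one of the $m$ even unsigned classes. Because the CAC condition $d^*(\set{I}_j)\cap d^*(\set{I}_k)=\emptyset$ forces distinct codewords into distinct even classes (if two shared a class their $d^*$-sets would share an element), the number of codewords is at most the number of even classes, giving $M(L,3)\le m=(L-2)/4$. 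This argument assumes nothing about equi-difference structure and is unaffected by degenerate codewords.

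For the matching construction I would exhibit an explicit equi-difference code meeting the bound. Take the odd generators $g_j=2j-1$ for $j=1,\dots,m$ and set $\set{I}_j=\{0,\,2j-1,\,4j-2\}$, so that $d^*(\set{I}_j)=\{\pm(2j-1),\pm(4j-2)\}$. Each codeword contributes the odd class $2j-1$ and the even class of $4j-2$; the odd classes $1,3,\dots,2m-1$ are plainly distinct, and the even values $4j-2$ are pairwise inequivalent under $d\leftrightarrow L-d$, since $4j-2\equiv-(4k-2)\pmod L$ reduces to $j+k\equiv 1\pmod{2m+1}$, impossible for $2\le j+k\le 2m$. Thus the $m$ codewords occupy $2m$ distinct unsigned classes with disjoint difference sets, forming a CAC; non-degeneracy ($\lvert d^*(\set{I}_j)\rvert=4$) follows because $3g_j$ is odd and hence cannot be divisible by the even modulus $L$. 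This yields $M^e(L,3)\ge m$, completing the sandwich.

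The main obstacle is conceptual rather than computational: identifying the right invariant for the upper bound. Counting all $L-1$ differences and assuming four per codeword suggests $M\le(L-1)/4$, but that reasoning is not even valid in general, since degenerate codewords of the form $\{0,L/3,2L/3\}$ (when $3\mid L$) carry only two distinct differences. The parity argument sidesteps this entirely, relying only on the presence of a single even difference regardless of how many a codeword has. Observing that $L\equiv 2\pmod 4$ makes exactly $m$ even unsigned classes available while guaranteeing each codeword consumes at least one is what simultaneously renders the bound tight and forces the extremal structure to be equi-difference with odd generators, which is precisely what the construction realizes.
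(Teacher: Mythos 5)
Your proof is correct: the parity argument (every weight-three codeword must contain an even nonzero difference, and $L\equiv 2\pmod 4$ leaves exactly $(L-2)/4$ even unsigned difference classes) gives the upper bound, and the odd-generator codewords $\{0,2j-1,4j-2\}$ realize it. The paper itself states this theorem as a cited result from \cite{LT05} without giving a proof, and your argument is essentially the standard one from that reference, so there is nothing in the paper to diverge from.
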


\begin{theorem}[\cite{Fu,JMJTT,Mishima09}]
Let $L=4t$. Then
\[
M(L,3) = 
\begin{cases}
    7L/64 & \text{if } t\equiv 0 \text{ (mod $8$)},\\
    (7L+8)/64 & \text{if } t\equiv 1 \text{ (mod $8$)},\\
    (7L-48)/64 & \text{if } t\equiv 2,10 \text{ (mod $24$)},\\
    (7L+24)/64 & \text{if } t\equiv 3 \text{ (mod $24$)},\\
    (7L-32)/64 & \text{if } t\equiv 4,20 \text{ (mod $24$)},\\
    (7L-24)/64 & \text{if } t\equiv 5,13 \text{ (mod $24$)},\\
    (7L-16)/64 & \text{if } t\equiv 6 \text{ (mod $8$)},\\
    (7L-8)/64 & \text{if } t\equiv 7 \text{ (mod $8$)},\\
    (7L-40)/64 & \text{if } t\equiv 11,19 \text{ (mod $24$)},\\
    (7L+32)/64 & \text{if } t\equiv 12 \text{ (mod $24$)},\\
    (7L+16)/64 & \text{if } t\equiv 18 \text{ (mod $24$)},\\
    (7L+40)/64 & \text{if } t\equiv 21 \text{ (mod $24$)}.\\
\end{cases}
\]
\label{thm:CAC_4t}
\end{theorem}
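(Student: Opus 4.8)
Since Theorem~\ref{thm:CAC_4t} compiles results from three separate papers and pins down the entire size spectrum of optimal weight-three CACs across all twelve residue classes of $t\bmod 24$ (or $\bmod\,8$), I would not attempt a single unified formula argument. Instead I would follow the two-sided strategy that is standard for weight-three CAC spectra: a matching upper and lower bound in each class, with the heavy lifting hidden in a reduction to a purely combinatorial extremal problem on $\mathbb{Z}_L$.

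The plan is to first reduce to the equi-difference case. A weight-three codeword $\set{I}=\{a,b,c\}$ has $|d^*(\set{I})|=6$ in general but only $4$ when $\set{I}$ is equi-difference, so equi-difference codewords are the most difference-economical ones, and one expects them to be optimal for even $L$ (as already reflected by Theorem~\ref{thm:LT05}, where $M(L,3)=M^e(L,3)$). The first, and genuinely hard, step is therefore to establish $M(L,3)=M^e(L,3)$ for $L=4t$: one must show that any codeword consuming six differences can be traded, without loss, for equi-difference codewords, treating the self-paired difference $L/2$ (which exists precisely because $L$ is even) with care. I would attack this by an exchange/counting argument on $d^*$ as constrained by \eqref{eq:CAC}, showing that a non-equi-difference codeword blocks enough difference pairs that it can never beat the equi-difference optimum.

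With this reduction in hand, the problem becomes a maximum-matching problem. I would identify each nonzero difference with its representative in $\{1,\dots,L/2\}$ (folding $d$ and $L-d$ together) and build the \emph{doubling graph} $\Gamma$ on this vertex set by joining each $g$ to the representative of $2g\pmod L$. An equi-difference codeword with generator $g$ occupies exactly the pair $\{g,2g\}$, so a family of equi-difference codewords with pairwise disjoint difference sets is precisely a matching in $\Gamma$, whence $M^e(L,3)=\nu(\Gamma)$, the maximum matching number. Next I would analyze $\Gamma$ as the functional graph of the fold-and-double map: it decomposes into components that are cycles with trees attached, the $2$-to-$1$ branching coming from the fact that $g$ and $L/2-g$ share an image, a self-loop at $g=L/3$ appearing exactly when $3\mid L$, and the component sizes governed by the multiplicative order of $2$ modulo the odd part of $L$. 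Computing $\nu(\Gamma)$ component by component and summing yields the closed forms; the residue of $t$ modulo $8$ controls the $2$-adic contribution and modulo $3$ the fixed-point/short-cycle contribution, and together they produce the split modulo $24$ and the additive constants ($+8,-48,+24,\dots$) as boundary corrections.

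The main obstacle I anticipate is twofold. The optimality direction, $M(L,3)\le M^e(L,3)$, i.e.\ that six-difference codewords cannot help, is the subtle part and is exactly where the cited papers expend most of their effort; a clean exchange argument must rule out any advantage from non-equi-difference codewords in every class. The second difficulty is bookkeeping: the exact value $\nu(\Gamma)$ is sensitive to the parity and size of each orbit under doubling, so the twelve-way case analysis is essentially unavoidable, and verifying tightness requires, in each class, an explicit construction, typically a large equi-difference block completed by a few hand-chosen codewords that absorb the leftover differences, that attains the computed bound.
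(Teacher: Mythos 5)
Your plan collapses at its first step: the reduction $M(L,3)=M^{e}(L,3)$ for $L=4t$, which you correctly identify as the load-bearing claim, is not merely hard --- it is false. Take $L=32$. The seven codewords $\{0,1,2\}$, $\{0,3,6\}$, $\{0,5,10\}$, $\{0,7,14\}$, $\{0,9,20\}$, $\{0,13,28\}$, $\{0,8,16\}$ have pairwise disjoint difference sets (indeed their $d^{*}$'s partition $\{1,\dots,31\}$), so $M(32,3)=7$; but a short check of the fifteen equi-difference candidates $\{0,g,2g\}$, $1\le g\le 15$ (equivalently, a maximum-matching computation in your graph $\Gamma$) gives $M^{e}(32,3)=5$. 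The structural reason is specific to $4\mid L$: for odd $g$, the generators $g$ and $L/2-g$ are both odd and occupy the \emph{same} even difference pair $\{2g,\,L-2g\}$, so equi-difference codewords can exploit at most half of the $L/4$ odd difference pairs $\{d,L-d\}$; the remaining odd pairs can only be consumed by non-equi-difference codewords such as $\{0,9,20\}$, each using two odd pairs and one even pair. Such codewords are therefore \emph{essential} to optimal codes, no exchange argument can trade them away, and $\nu(\Gamma)$ computes the wrong quantity. (Your appeal to Theorem~\ref{thm:LT05} is misleading here: $M=M^{e}$ does hold for $L\equiv 2$ (mod $4$), precisely because this even-pair collision between $g$ and $L/2-g$ cannot occur there.)

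Two further remarks. First, the paper itself contains no proof of Theorem~\ref{thm:CAC_4t}; it is quoted from \cite{Fu,JMJTT,Mishima09}, and those proofs run along different lines than yours: split the nonzero differences by parity, note that among the primary differences $a$, $b-a$, $L-b$ of a codeword $\{0,a,b\}$ an even number (zero or two) are odd, observe that all-even codewords halve to a CAC of length $L/2$ (giving a recursion), and then optimize the mix of odd-generator equi-difference codewords, non-equi-difference codewords, and the recursive all-even part --- the constant $7/32$ arises exactly from this balance, and the matching constructions contain on the order of $L/16$ non-equi-difference codewords, not ``a few hand-chosen'' ones. Your doubling-graph matching is the right tool for \emph{odd} $L$ (it is what \cite{Fu_Lo_Shum_2012} does and what this paper uses in Section~\ref{sec:optimal}), not for $L\equiv 0$ (mod $4$). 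Second, the statement as printed cannot be literally correct, since it yields non-integer values (for $L=32$, $t=8\equiv 0$ (mod $8$), it gives $7L/64=3.5$); comparison with Corollary~\ref{cor:SCAC_even2}, where the same expressions are applied to the doubled length, shows the intended values are twice those printed (denominator $32$), consistent with $M(32,3)=7$ and $M(16,3)=3$.
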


\begin{theorem}[\cite{Wu}]
The followings hold.
\begin{enumerate}
\item $M(L,3)=M^e(L,3)=(L-1)/4$ if $L=2^{2t}+1$ for $t\geq 1$.
\item $M(L,3)=M^e(L,3)=(L+1)/4$ if $L=2^{2^t}-1$ for $t\geq 2$.
\end{enumerate}
\label{thm:knowCAC}
\end{theorem}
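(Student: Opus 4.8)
The plan is to prove both identities by the same two-sided scheme. Since every equi-difference CAC is a CAC, one always has $M^e(L,3)\le M(L,3)$; hence it suffices to prove a counting \emph{upper} bound $M(L,3)\le B$ and to exhibit an equi-difference CAC of size $B$, where $B=(L-1)/4$ in case (i) and $B=(L+1)/4$ in case (ii). The two bounds then squeeze $M^e(L,3)=M(L,3)=B$.

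For the upper bound I would argue by counting differences. Because $L$ is odd, no nonzero difference equals its own negative, so $\mathbb{Z}_L\setminus\{0\}$ splits into $(L-1)/2$ pairs $\{d,-d\}$, and the difference set $d^*(\set{I})$ of any codeword is a union of such pairs. A weight-three codeword $\{0,x,y\}$ has differences $\pm x,\pm y,\pm(y-x)$, so $|d^*(\set{I})|\in\{2,4,6\}$, and a short case check shows that $|d^*(\set{I})|=2$ forces $\set{I}$ to be an equi-difference word $\{0,g,2g\}$ with $3g\equiv 0\pmod L$, i.e. $g=L/3$; otherwise $|d^*(\set{I})|\ge 4$. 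Since the sets $d^*(\set{I})$ are pairwise disjoint, $\sum_{\set{I}}|d^*(\set{I})|\le L-1$. In case (i), $L=2^{2t}+1\equiv 2\pmod 3$, so no degenerate word exists and $4M\le L-1$, giving $M\le (L-1)/4$. In case (ii), $L=2^{2^t}-1\equiv 0\pmod 3$, so at most one degenerate word (of difference-weight $2$) can occur, and $2+4(M-1)\le L-1$ gives $M\le (L+1)/4$.

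For the construction I would pass to the set of classes $[g]=\{g,-g\}$, of which there are $(L-1)/2$, and record that an equi-difference word $\{0,g,2g\}$ occupies exactly the two classes $[g]$ and $[2g]$. Thus a maximum equi-difference CAC corresponds to a perfect matching of the functional graph of the doubling permutation $\sigma\colon [g]\mapsto[2g]$, whose edges join $[g]$ to $[2g]$; such a matching exists precisely when every $\sigma$-cycle has even length, the degenerate fixed point $[L/3]$ (when present) being handled separately by the degenerate word. The class-orbit of $[g]$ has length $o(g)$ when $-g$ lies outside the doubling orbit of $g$ and $o(g)/2$ when it lies inside, where $o(g)=\mathrm{ord}_{L'}(2)$ with $L'=L/\gcd(g,L)$. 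In case (i), $2^{2t}\equiv -1\pmod{L'}$ for every divisor $L'\mid L$ with $L'\ge 3$, which simultaneously places $-g$ in the orbit and forces $4\mid o(g)$, so every class-orbit has even length $o(g)/2$. In case (ii), $\mathrm{ord}_L(2)=2^t$ makes each $o(g)$ a power of two, and the same computation shows the only odd (in fact length-one) class-orbit is $[L/3]$. Pairing off the even cycles, and appending the degenerate word in case (ii), yields exactly $B$ pairwise difference-disjoint codewords.

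I expect the cycle-parity analysis of $\sigma$ to be the crux, and the genuine difficulty to be the composite values of $L$ (e.g. $65=5\cdot 13$ or $15=3\cdot 5$), where one cannot work inside a field: the parity claim must be verified uniformly over every divisor $L'=L/\gcd(g,L)$ by controlling $\mathrm{ord}_{L'}(2)$ together with whether $-1\in\langle 2\rangle\pmod{L'}$. Isolating the unique degenerate class $[L/3]$ as the sole odd orbit in case (ii) is the delicate point that accounts for the $+1$ in $(L+1)/4$.
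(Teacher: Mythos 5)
Your proof is correct. One point of orientation: the paper does not actually prove this theorem---it is imported from Wu and Fu \cite{Wu}---so there is no internal proof to compare against; but your argument is essentially the standard one, and it coincides with the machinery this paper itself recounts in Section~\ref{sec:optimal} for its new CAC results. Your classes $[g]=\{g,-g\}$ with the doubling map are exactly the graph $G(L)$; your rule for the class-orbit length (the order $o(g)$, halved precisely when $-g$ lies in the doubling orbit) is the multiplicative suborder $c_{L/\gcd(g,L)}$ of Lemma~\ref{lem:cyclelength}(2); and your matching count is the identity $M^e(L,3)=\bigl((L-1)/2-N_{odd}(L)\bigr)/2+\chi(3|L)$ quoted from \cite{Fu_Lo_Shum_2012}. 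The two number-theoretic claims you isolate are the crux and both check out: for $L=2^{2t}+1$, reducing $2^{2t}\equiv -1 \pmod{L}$ modulo any divisor $L'\ge 3$ gives $e_{L'}\mid 4t$ but $e_{L'}\nmid 2t$, so the power of $2$ in $e_{L'}$ equals that in $4t$, hence $4\mid e_{L'}$; since $-1\in\langle 2\rangle \pmod{L'}$, the cycle length is $c_{L'}=e_{L'}/2$, which is even, so $N_{odd}(L)=0$ and a perfect matching yields $(L-1)/4$ codewords. For $L=2^{2^t}-1$, every $e_{L'}$ divides $e_L=2^t$, so every $c_{L'}$ is a power of two and is odd only when $c_{L'}=1$, i.e.\ $L'=3$; thus the loop at $L/3$ is the unique odd cycle and the count is $(L-3)/4+1=(L+1)/4$. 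Finally, your counting upper bound (difference sets are pairwise disjoint, have even cardinality for odd $L$, and have cardinality $2$ only for the generator-$L/3$ word, which needs $3\mid L$) applies to arbitrary, not just equi-difference, codewords---this is exactly what upgrades the conclusion to $M(L,3)=M^e(L,3)$ rather than only $M^e(L,3)$. The only step worth writing out fully in a polished version is the parity deduction from $e_{L'}\mid 4t$, $e_{L'}\nmid 2t$, since that is the one place where evenness of every cycle could conceivably fail for composite $L$.
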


\begin{theorem}[\cite{Mishima14}]
$M(L,3)=M^e(L,3)=(L-1)/4$ if
\begin{enumerate}
\item $L=2^{2t-1}-2^{t}+1$ for $t\geq 2$, or
\item $L=2^{2t-1}+2^{t}+1$ for $t\geq 1$.
\end{enumerate}
\label{thm:knowCAC2}
\end{theorem}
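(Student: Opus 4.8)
The plan is to sandwich $(L-1)/4$: prove an upper bound $M(L,3)\le (L-1)/4$ valid for every odd $L$, and then exhibit an explicit equi-difference code of size exactly $(L-1)/4$ for the two families. Since $M^e(L,3)\le M(L,3)$ always holds, the chain $(L-1)/4\le M^e(L,3)\le M(L,3)\le (L-1)/4$ will force all three quantities to coincide, which simultaneously delivers the equality $M=M^e$ asserted in the statement. Throughout I would use that $3\nmid L$ for both families, since reducing $L=2^{2t-1}\pm 2^t+1$ modulo $3$ (where $2\equiv-1$) gives $\pm1$.

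For the upper bound I would count unordered difference pairs. As $L$ is odd, $\mathbb{Z}_L\setminus\{0\}$ splits into $(L-1)/2$ pairs $\{d,-d\}$. A weight-three codeword $\set{I}=\{a,b,c\}$ has $d^*(\set{I})$ governed by $d_1=a-b$, $d_2=b-c$ and $d_1+d_2$, so it meets two or three such pairs; it never meets only one, since the single-pair case would force $3\mid L$. A short case check shows it meets exactly two precisely when one of $d_1=d_2$, $2d_1+d_2=0$, $d_1+2d_2=0$ holds, i.e.\ precisely when $\set{I}$ is equi-difference. Because \eqref{eq:CAC} makes the difference sets of distinct codewords disjoint, the occupied pairs are disjoint, whence $2M\le (L-1)/2$ and $M(L,3)\le(L-1)/4$; equality forces every codeword to occupy only two pairs, hence to be equi-difference, which already explains why $M^e$ meets $M$ at optimality.

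The core is the construction. First I would record the arithmetic fact that $2^{4t-2}\equiv -1\pmod L$ for both families: reducing $2^{2t-1}$ and $2^{2t}=2\cdot 2^{2t-1}$ modulo $L=2^{2t-1}\pm 2^t+1$ and squaring gives $2^{4t-2}\equiv (2^t\pm 1)^2\equiv -1$. Hence $2^{4t-2}x\equiv -x$ for every $x$, so the doubling map $\sigma\colon\{x,-x\}\mapsto\{2x,-2x\}$ on the $(L-1)/2$ pairs is a fixed-point-free permutation whose every cycle has length dividing $4t-2$. I would then show each cycle length $f$ is even: for the pair of an $x$ with $L'=L/\gcd(x,L)\ge 3$, the value $f$ is least with $2^f\equiv\pm1\pmod{L'}$, and since reduction gives $2^{4t-2}\equiv-1\pmod{L'}$ with $-1\not\equiv1$, the possibility $2^f\equiv 1$ is excluded; thus $2^f\equiv-1$, so $\operatorname{ord}_{L'}(2)=2f$ divides $4t-2-f$, and writing $4t-2=f(2j+1)$ together with $4t-2=2(2t-1)$ and $2t-1$ odd forces the single factor $2$ into $f$, so $f$ is even. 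Splitting each even cycle into consecutive blocks $\{\{g,-g\},\{2g,-2g\}\}$ and reading each block as the difference set $\{\pm g,\pm 2g\}$ of the equi-difference codeword $\{0,g,2g\}$ yields codewords whose difference sets are pairwise disjoint and exhaust $\mathbb{Z}_L\setminus\{0\}$, i.e.\ an equi-difference CAC of size $(L-1)/4$.

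The main obstacle is the number-theoretic bookkeeping in the construction, not the counting bound. Two points demand care: propagating the even-cycle conclusion to every divisor $L'$ of $L$ through the reduction argument above, so that composite lengths such as $L=145$ at $t=4$ in the plus family are handled uniformly; and verifying non-degeneracy of the blocks, namely that $\{\pm g,\pm 2g\}$ genuinely has four elements, which reduces to $3g\not\equiv0$ and hence again to $3\nmid L$. Granting these, every block contributes exactly four distinct differences and no codeword collapses to a two-element difference set, so the constructed code attains the upper bound and the theorem follows.
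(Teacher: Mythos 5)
Your proof is correct, but note that the paper does not prove this statement at all: Theorem~\ref{thm:knowCAC2} is imported from \cite{Mishima14} and used as a black box to produce Corollary~\ref{cor:SCAC_even1} via Theorem~\ref{thm:construction}. So the fair comparison is with the machinery the paper develops for its own CAC results in Section~\ref{sec:optimal}, and your argument is essentially a self-contained reconstruction of that machinery. Your pair-counting upper bound is the content of \eqref{eq:12} and Proposition~\ref{pro:optimal_with_leave}: when $3\nmid L$, each weight-three codeword occupies at least two of the $(L-1)/2$ pairs $\{d,-d\}$, with equality exactly for equi-difference codewords. Your even-cycle construction amounts to proving $N_{odd}(L)=0$ for these two families and then invoking the matching formula $M^e(L,3)=\bigl((L-1)/2-N_{odd}(L)\bigr)/2+\chi(3|L)$ of \cite{Fu_Lo_Shum_2012}; your identification of the cycle length through $\{x,-x\}$ with the suborder $c_{L'}$, where $L'=L/\gcd(x,L)$, is exactly Lemma~\ref{lem:cyclelength}(2), and the key arithmetic input $2^{4t-2}\equiv-1\pmod{L}$, pushed down to every divisor $L'>1$, forces every $c_{L'}$ to be even. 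This uniformity is what your route buys: it bypasses the prime-by-prime conditions of Theorems~\ref{thm:O_prime} and~\ref{thm:tight_codition} and handles prime-power lengths ($L=25$ at $t=3$) and composite lengths ($L=145$ at $t=4$) in one stroke. I checked the delicate steps and they hold: $2^{4t-2}\equiv(2^t\pm1)^2\equiv-1\pmod{L}$; the exclusion of $2^f\equiv1\pmod{L'}$ (legitimate because $f$ divides $4t-2$, so $2^f\equiv1$ would give $2^{4t-2}\equiv1\not\equiv-1$); the parity conclusion from $4t-2=f(2j+1)$ together with $4t-2$ being exactly divisible by $2$; and the non-degeneracy of the blocks $\{\pm g,\pm2g\}$, which needs only $3\nmid L$. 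The one blemish is your opening claim that $M(L,3)\le(L-1)/4$ holds for every odd $L$: it fails when $3\mid L$ (e.g.\ $M(15,3)=(15+1)/4$ by Theorem~\ref{thm:knowCAC}, because the exceptional codeword $\{0,L/3,2L/3\}$ occupies a single pair), but your actual counting explicitly rules this case out via $3\nmid L$, which you verified for both families, so the proof is unaffected.
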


As for SCACs of weight three, there are few results reported in the literature.
An exception is the following.

\begin{theorem}[\cite{SCAC}]
Let $L$ be an integer factorized as $3^q 7^r \ell$, where $\ell$ is an even integer not divisible by 3 or 7. Then for $L\geq 18$ we have
\[
 M_S(L,3) \leq \begin{cases}
 \big\lfloor (L-2)/6 \big\rfloor & \text{ if }  q=r=0, \\
 \big\lfloor L/6 \big\rfloor & \text{ if }  q\geq1,r=0,\\
 \big\lfloor (L-1)/6 \big\rfloor & \text{ if }  q=0,r\geq1,\\
 \big\lfloor (L+1)/6 \big\rfloor & \text{ if }  q\geq1,r\geq1.\\
 \end{cases}
\]
\label{thm:knowSCAC}
\end{theorem}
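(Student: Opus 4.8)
The plan is to bound $M=|\set{C}|$ by a parity-counting argument. Since $\ell$ is even, $L$ is even, so it suffices to treat even lengths; note first that for even $L$ the four stated quantities collapse: $\lfloor(L-1)/6\rfloor=\lfloor(L-2)/6\rfloor$ always, and $3\mid L$ together with $L$ even forces $6\mid L$, whence $\lfloor L/6\rfloor=\lfloor(L+1)/6\rfloor=L/6$. Thus the $7$-dependence is vacuous here, and I only need $M\le\lfloor(L-2)/6\rfloor$ when $3\nmid L$ and $M\le L/6$ when $3\mid L$. By Proposition~\ref{prop:property} the sets $d^+(\set{I}_1),\dots,d^+(\set{I}_M)$ are pairwise disjoint and, by \eqref{eq:d+_union}, contained in $\{2,3,\dots,L-1\}$, whose even elements are $2,4,\dots,L-2$, numbering $(L-2)/2$. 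Writing $e_j$ for the number of even integers in $d^+(\set{I}_j)$, disjointness will give
\[
\sum_{j=1}^{M} e_j \;\le\; \frac{L-2}{2}.
\]

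The crux is a lower bound on each $e_j$. The key lemma I aim to prove is: $e_j\ge 3$ for every weight-three codeword that can occur in an SCAC, with $e_j=2$ only for the degenerate equi-difference codeword $\{0,L/3,2L/3\}$ and its translates, which exist only when $3\mid L$. I would argue as follows. For each nonzero difference $d$ exactly one of $d,d+1$ is even; call it $\phi(d)$ (so $\phi(d)=d$ if $d$ is even, $\phi(d)=d+1$ if $d$ is odd). Because $d^+(\set{I}_j)=d^*(\set{I}_j)+\{0,1\}$, the even integers of $d^+(\set{I}_j)$ are exactly $\{\phi(d):d\in d^*(\set{I}_j)\}$, so $e_j=\big|\{\phi(d):d\in d^*(\set{I}_j)\}\big|$. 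A one-line check shows $\phi(d)=\phi(d')$ with $d\neq d'$ holds iff $\{d,d'\}=\{2k-1,2k\}$, so each even value has at most two preimages and $e_j\ge\tfrac12|d^*(\set{I}_j)|$. For a weight-three codeword $\{0,a,b\}$ one has $d^*=\{\pm a,\pm b,\pm(b-a)\}$, and an elementary analysis of coincidences shows $|d^*|=2$ forces $3a\equiv 0$ and the codeword $\{0,L/3,2L/3\}$, while otherwise $|d^*|\in\{3,4,6\}$. When $|d^*|=6$ the inequality $e_j\ge3$ is immediate; for $|d^*|\in\{3,4\}$ I would use that the difference $2a$ is always even, hence never the odd member $2k-1$ of a colliding pair, which blocks a second $\phi$-collision and again yields $e_j\ge3$.

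Granting the lemma, I would finish case by case. If $3\nmid L$ (covering $q=r=0$ and $q=0,r\ge1$) no degenerate codeword exists, every $e_j\ge3$, and the displayed inequality gives $3M\le(L-2)/2$, i.e.\ $M\le\lfloor(L-2)/6\rfloor=\lfloor(L-1)/6\rfloor$. If $3\mid L$ (covering $q\ge1$), all translates of $\{0,L/3,2L/3\}$ share one and the same set $d^+$, so by disjointness at most one of them lies in $\set{C}$; thus at most one codeword has $e_j=2$ and the rest have $e_j\ge3$, giving $\sum_j e_j\ge 3M-1$ and hence $3M-1\le(L-2)/2$, i.e.\ $M\le L/6=\lfloor(L+1)/6\rfloor$.

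The hard part will be the key lemma, and within it the near-degenerate configurations with $|d^*|=4$: when two of the six differences are forced to be consecutive (for example when $3a\equiv\pm1$), one must check carefully that the parities of the two potentially-colliding pairs cannot both collapse, so that at least three distinct even representatives always survive. This parity bookkeeping is precisely the statement that a codeword cannot produce too many E-rough tubes in $d^+(\set{I}_j)$, which is where the classification of tubes and gaps introduced above is meant to do the work; handling these boundary cases rigorously is the step I expect to require the most care.
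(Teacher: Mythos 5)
This theorem is stated in the paper as a citation from \cite{SCAC}; the paper itself contains no proof of it, so there is no in-paper argument to compare against, only the paper's general machinery. Your outline is sound for the case actually stated (since $\ell$ is even, $L$ is even): the collapse of the four bounds to $\lfloor(L-2)/6\rfloor$ when $3\nmid L$ and to $L/6$ when $6\mid L$ is correct, the disjointness count $\sum_j e_j\le(L-2)/2$ is correct, and your key lemma ($e_j\ge 3$ for every admissible codeword, with $e_j=2$ exactly for the generator-$L/3$ codeword, of which at most one can occur since all its translates share the same $d^+$) is true and does yield both bounds. Be aware, though, that the parity refinement buys nothing here: the unrefined count $\sum_j|d^+(\set{I}_j)|\le L-2$ of \eqref{eq:L_lowerbound}, combined with the facts established in Lemmas~\ref{lemma:nonc1}--\ref{lemma:nonc2} and Theorem~\ref{thm:cs} (namely $|d^+(\set{I})|\ge 6$ for every codeword other than $\set{I}_{E_2}$, and $|d^+(\set{I}_{E_2})|=4$), gives $6M\le L-2$, respectively $6(M-1)+4\le L-2$, which are exactly your two inequalities. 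Your $e_j\ge 3$ lemma is the halved version of $|d^+|\ge 6$, and it is the more delicate of the two to prove.

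Two defects in your sketch of the key lemma need repair. First, the enumeration ``otherwise $|d^*|\in\{3,4,6\}$'' is false: a non-equi-difference codeword whose largest difference equals $L/2$ has $|d^*|=5$ (e.g.\ $\{0,6,13\}$ in $\mathbb{Z}_{26}$, where $d^*=\{6,7,13,19,20\}$). This is harmless, since your half bound gives $e_j\ge\lceil 5/2\rceil=3$, but the case must be listed or the case analysis is not exhaustive. Second, and more substantively, your stated mechanism for the non-dispersive equi-difference case fails for one of the two signs. If $3g\equiv 1\pmod L$ (so $3g=L+1$), then $d^*=\{g-1,g,2g-1,2g\}$, and the pair $(2g-1,2g)$ \emph{does} collide under $\phi$: its odd member is $2g-1$, so the evenness of $2g$ blocks nothing. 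What saves the count is a parity fact you never invoke: $L$ even forces $3g=L\pm1$ to be odd, hence $g$ is odd; then in the $+1$ case the pair $(g-1,g)$ has even smaller member and cannot collide, while in the $-1$ case, where $d^*=\{g,g+1,2g,2g+1\}$, the pair $(2g,2g+1)$ cannot collide. Either way exactly one $\phi$-collision occurs and $e_j=4-1=3$. (You also need the one-line check that no two of $L/4,L/2,3L/4$ are consecutive, so the generator-$L/4$ codeword has $e_j=3$ rather than $2$.) With these repairs your proof of the stated even-length case is complete; your observation that the factor $7$ is vacuous when $L$ is even is also correct---the $7$-refinement matters only for the odd lengths covered by the original theorem of \cite{SCAC}.
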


\subsection{A Necessary Condition}
We close this section with the following necessary condition for the existence of an SCAC.
The result delineates the impact of solitary gaps and is based on SCAC characteristics presented in Proposition~\ref{prop:property}.

\begin{lemma}
Consider a given code $\set{C}\in\SCAC(L,\omega)$. 
If there exists one codeword, say $\set{I}_j$, having  $\lambda$ solitary gaps in $d^+(\set{I}_j)$, then 
\[
L \geq 2+\lambda+\sum_{\set{I}\in \C} |d^+(\set{I})|.
\]
\label{lemma:gap}
\end{lemma}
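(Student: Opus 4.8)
The plan is to reduce the claimed inequality to a single counting statement about uncovered points. Write $\set{U} := \bigcup_{\set{I}\in\C} d^+(\set{I})$. By part~(ii) of Proposition~\ref{prop:property} this union is disjoint, so $|\set{U}| = \sum_{\set{I}\in\C}|d^+(\set{I})|$, and by \eqref{eq:d+_union} we have $\set{U}\subseteq\{2,3,\ldots,L-1\}$, a set of size $L-2$. Hence, writing $\bar{\set{U}} := \{2,3,\ldots,L-1\}\setminus\set{U}$ for the uncovered points, $|\bar{\set{U}}| = (L-2) - \sum_{\set{I}\in\C}|d^+(\set{I})|$. The target bound $L \ge 2 + \lambda + \sum_{\set{I}\in\C}|d^+(\set{I})|$ is thus \emph{equivalent} to the single inequality $|\bar{\set{U}}| \ge \lambda$, and the whole problem becomes: produce at least $\lambda$ points of $\{2,\ldots,L-1\}$ covered by no codeword.

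The natural mechanism is an injection. Since all $\lambda$ solitary gaps lie inside the one set $d^+(\set{I}_j)$, they are pairwise disjoint intervals of $\{2,\ldots,L-1\}$; therefore it suffices to show that \emph{each} solitary gap contains at least one point of $\bar{\set{U}}$, after which picking one such point per gap yields $\lambda$ distinct elements of $\bar{\set{U}}$. I would establish this per-gap claim by a parity count. Take a solitary E-rough gap $G_j(x,y)$, so $x,y$ are both even and $[x,y]$ contains exactly one more even point than odd point. The points of $[x,y]$ lying in $\set{U}$ are contributed by the other codewords, and here the key structural feature is that each $d^*$ enters $d^+$ only through the dominoes $\{d,d+1\}$; since $x-1,y+1\in d^+(\set{I}_j)$ while the sets $d^+(\set{I}_k)$ are disjoint from $d^+(\set{I}_j)$, no covering domino may straddle an endpoint of the gap, so every covering domino lies inside $[x,y]$. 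Solitarity says no tube of $\set{U}$ contained in $[x,y]$ is E-rough, i.e.\ every tube of $\set{U}$ strictly interior to the gap is flat or O-rough and hence has at most as many even as odd points. Comparing the numbers of even and odd points of $[x,y]$ then forces the surplus even point to remain uncovered. The solitary O-rough case is symmetric and yields an uncovered odd point. Combining the two cases, each solitary gap of $\set{I}_j$ harbours a point of $\bar{\set{U}}$, and disjointness of the gaps gives $|\bar{\set{U}}|\ge\lambda$, which is the lemma.

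The hard part will be the boundary tubes. The even/odd count above is clean only for tubes \emph{strictly} interior to $G_j(x,y)$: a tube of $\set{U}$ touching the endpoint $x$ (or $y$) merges, inside $\set{U}$, with the adjacent tube of $d^+(\set{I}_j)$, so it is not a tube of $\set{U}$ contained in the gap and solitarity gives no direct grip on its parity. Concretely, the danger is that the whole gap is covered by the other codewords, leaving the parity deficit unexploited; one must rule this out. I expect to resolve it precisely through the domino structure together with the fact—recorded just after the definition of tubes—that a shifted difference set admits no length-one tube: under these constraints a covering of $[x,y]$ built from non-straddling dominoes cannot realise the required endpoint parities simultaneously, so the even surplus cannot be absorbed at either end. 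Pinning down this boundary analysis, and thereby excluding the fully covered configuration, is the step I anticipate as the main obstacle; the reduction and the interior count are routine once it is in place.
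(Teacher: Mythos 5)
Your reduction of the lemma to the claim that every solitary gap contains an uncovered point is exactly the paper's strategy, and your two preparatory observations are sound: the inequality is equivalent to having at least $\lambda$ uncovered points, and no domino $\{b,b+1\}$ of a codeword $\set{I}_k$, $k\neq j$, can straddle an endpoint of $G_j(x,y)$, since it would then contain $x-1$ or $y+1\in d^+(\set{I}_j)$, contradicting $d^+(\set{I}_j)\cap d^+(\set{I}_k)=\emptyset$. However, the step you explicitly leave open---excluding a fully covered solitary gap---is the crux of the lemma, and the resolution you anticipate (an endpoint-parity obstruction for coverings by non-straddling dominoes, aided by the absence of length-one tubes) does not exist. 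A covering chain can change parity in the \emph{interior} of the gap through a pair of consecutive differences of a single non-dispersive codeword. Concretely, let $L=26$, $\set{I}_1=\{0,6,21\}$ and $\set{I}_2=\{0,9,18\}$. Then $d^*(\set{I}_1)=\{5,6,11,15,20,21\}$ and $d^*(\set{I}_2)=\{8,9,17,18\}$, so $d^+(\set{I}_1)=\{5,6,7,11,12,15,16,20,21,22\}$ and $d^+(\set{I}_2)=\{8,9,10,17,18,19\}$; Proposition~\ref{prop:property} holds, so $\{\set{I}_1,\set{I}_2\}\in\SCAC(26,3)$. The gap $G_1(8,10)$ is E-rough and is \emph{entirely covered} by the dominoes $\{8,9\},\{9,10\}$ of $\set{I}_2$: the chain runs from even $8$ to odd $9$ precisely because $8,9$ are consecutive elements of $d^*(\set{I}_2)$, so no parity conflict at the endpoints arises. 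Moreover the tubes of the union $\set{U}=\bigcup_{\set{I}\in\C}d^+(\set{I})$ are $T(5,12)$ and $T(15,22)$, neither of which is included in $G_1(8,10)$; hence under your reading of ``tube'' (a tube of $\set{U}$, which is also the literal reading of the paper's definition) this gap is vacuously solitary yet contains no uncovered point. So the per-gap claim, in the form you state it, is false, and no boundary analysis can repair it.

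What actually closes the argument---and what the paper's proof is implicitly doing, though its wording glosses over the very merging phenomenon you noticed---is to run the solitarity/parity count on the tubes of the \emph{individual} sets $d^+(\set{I}_k)$, not on tubes of $\set{U}$. Your own non-straddling argument shows that every tube of $d^+(\set{I}_k)$, $k\neq j$, meeting $G_j(x,y)$ is wholly included in it: being an interval, it would otherwise contain $x-1$ or $y+1\in d^+(\set{I}_j)$, or the points $0,1$ in the edge cases $x=2$, $y=L-1$, which is impossible by disjointness and by \eqref{eq:d+_union}. Thus the covered part of a gap is a disjoint union of per-codeword tubes, each included in the gap, with no exceptional boundary pieces at all. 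Applying the solitarity hypothesis to \emph{these} tubes, none of them is E-rough, so each has at most as many even as odd integers, and the gap's surplus even integer must remain uncovered; the fully covered configuration is subsumed with no separate case. Your example is then consistent rather than contradictory: $T_2(8,10)$ is an E-rough tube of $d^+(\set{I}_2)$ included in $G_1(8,10)$, so under the per-codeword reading that gap is simply not solitary. The missing idea in your proposal is exactly this switch: solitarity must be invoked for the covering codewords' own tubes (which are always included in the gap), not for tubes of $\set{U}$ (which can merge with $d^+(\set{I}_j)$ across the gap's endpoints---the failure you correctly identified but could not overcome).
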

\begin{proof}
Let $G_j(x,y)$ be one of the $\lambda$ solitary gaps in $d^+(\set{I}_j)$.
We assume that $G_j(x,y)$ is E-rough, i.e., $x$ and $y$ are both even.
This implies the number of even integers in $G_j(x,y)$ is one more than that of odd integers.
By the definition of the solitary gap, we cannot find an E-rough tube in $\bigcup_{\set{I} \in \C} d^+(\set{I})$, say $T(x',y')$, such that $T(x',y') \vartriangleleft G_j(x,y)$.
From the defining property of flat and O-rough tubes, we know the number of odd integers in a flat or O-rough tube is equal to or bigger than that of even integers.
Thus we always can find an even integer in $G_j(x,y)$ which is not included in $\bigcup_{\set{I} \in \C} d^+(\set{I})$.
For the case $G_j(x,y)$ is O-rough, the proof goes along the same line as above and is omitted.
The result is that there exists an odd integer not included in $\bigcup_{\set{I} \in \C} d^+(\set{I})$.

We conclude that at least $\lambda$ integers in the interval $[2,L-1]$ do not belong to $\bigcup_{\set{I} \in \C} d^+(\set{I})$, since there exist $\lambda$ solitary gaps in $d^+(\set{I}_j)$.
Following Proposition~\ref{prop:property}, we finally obtain that
\[
L-2-\lambda \geq |\bigcup_{\set{I} \in \C} d^+(\set{I})|=\sum_{\set{I} \in \C} |d^+(\set{I})|.
\]
\qed
\end{proof}

\section{Property of Codewords} \label{sec:property}
Lemma~\ref{lemma:gap} provides a recipe for upper bounding the size of SCAC, which relies on $|d^+(\set{I})|$ for different codewords.
In this section, we derive $|d^+(\set{I})|$ for any codeword $\set{I}$.
The following definition is useful for the evaluation of $|d^+(\set{I})|$.

\begin{definition}
We adopt the terminology in~\cite{SCAC} and say that a codeword $\set{I}$ is {\em dispersive} if any two distinct elements in $d(\set{I})$ are not consecutive. Otherwise, it is {\em non-dispersive}.
\end{definition}

By Proposition~\ref{prop:property}(i), $|d^+(\set{I})|=2|d^*(\set{I})|$ if $\set{I}$ is a dispersive codeword in an SCAC.

\subsection{Non-equi-difference Codewords}

Let $\mathcal{I}=\{0,q_1,q_1+q_2\}$ be a non-equi-difference codeword in a code $\set{C}\in\SCAC(L,3)$ for some $q_1,q_2 \geq 2$ and $q_1+q_2 < L$.
After setting $q_3=L-q_1-q_2$, we have
\[
d^*(\set{I})=\{q_1,q_2,q_3,L-q_1,L-q_2,L-q_3\}.
\]
Now, we write $q_1,q_2,q_3$ in an ascending order as $q_l,q_m,q_u$.
Since $\mathcal{I}$ is non-equi-difference, $q_l,q_m,q_u$ must be mutually distinct and thus
\begin{equation}
q_l<q_m<q_u,L-q_u<L-q_m<L-q_l.  \label{eq:f}
\end{equation}
Therefore, 
\begin{equation} \label{eq:d_star_nonequal}
 |d^*(\set{I})|  = \begin{cases}
 5  & \text{ if } \ q_u=L/2, \\
 6  & \text{ if } \ q_u\neq L/2.
 \end{cases}
\end{equation}

\begin{lemma}
Let $\set{I}$ be a non-equi-difference codeword in a code $\set{C}\in\SCAC(L,3)$ with even $L$ and $d^*(\set{I})=\{q_l,q_m,q_u,L-q_u,L-q_m,L-q_l\}$, where the three parameters $q_l, q_m, q_u$ satisfy $q_l+q_m+q_u=L$ and the inequality in \eqref{eq:f}.
If $q_u<L/2$, then 
\begin{enumerate}
 \item $|d^+(\set{I})| = 8$ if $q_l + 1 = q_m =q_u-1=L/3$; and
 \item $|d^+(\set{I})| \geq 10$ otherwise.
\end{enumerate}
 \label{lemma:nonc1}
\end{lemma}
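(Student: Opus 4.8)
The plan is to count the distinct elements of $d^+(\set{I})=d^*(\set{I})+\{0,1\}$ directly from the explicit list $d^*(\set{I})=\{q_l,q_m,q_u,L-q_u,L-q_m,L-q_l\}$. Since $q_u<L/2$, \eqref{eq:d_star_nonequal} gives $|d^*(\set{I})|=6$, so $d^+(\set{I})$ is the union of the six pairs $\{q_l,q_l+1\},\{q_m,q_m+1\},\{q_u,q_u+1\}$ and $\{L-q_u,L-q_u+1\},\{L-q_m,L-q_m+1\},\{L-q_l,L-q_l+1\}$, and hence has at most $12$ elements. The first structural observation I would make is that these two groups never interfere: because $L$ is even and $q_u<L/2$ we have $q_u+1\le L/2<L/2+1\le L-q_u$, so the ``low'' pairs all lie in $[2,L/2]$ while the ``high'' pairs all lie in $[L/2+1,L-1]$. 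Moreover $q_l\ge 2$ by Proposition~\ref{prop:property}(i) together with \eqref{eq:d+_union}, so $L-q_l+1\le L-1$ and no pair wraps around modulo $L$. Thus I can count the low group and the high group separately and add.

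Next I would analyze the low group. Writing its six candidates in increasing order gives
\[
q_l<q_l+1\le q_m<q_m+1\le q_u<q_u+1,
\]
a non-decreasing list in which the only places an equality can occur are $q_l+1=q_m$ and $q_m+1=q_u$; every other pair is strictly separated by \eqref{eq:f} (there is no room for a third value strictly between two consecutive $q$'s, and $q_l=q_m$, $q_m=q_u$, $q_l=q_u$ are excluded). Calling these two potential collisions events $A$ and $B$, the number of distinct low values is exactly $6-[A]-[B]$, where $[\cdot]$ denotes the indicator. The high group is the image of the low group under $x\mapsto L-x$, and the same ordering argument shows its number of distinct values is also $6-[A]-[B]$ (the collision $L-q_m+1=L-q_l$ is again $A$, and $L-q_u+1=L-q_m$ is again $B$). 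Since the two groups are disjoint,
\[
|d^+(\set{I})|=12-2\big([A]+[B]\big).
\]

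The last step is to translate the events into the stated condition. Events $A$ and $B$ hold simultaneously precisely when $q_l,q_m,q_u$ are three consecutive integers; combined with $q_l+q_m+q_u=L$ this forces $q_m=L/3$, i.e. $q_l+1=q_m=q_u-1=L/3$, which is exactly the hypothesis of part (i), and then $|d^+(\set{I})|=12-4=8$. In every other configuration at most one of $A,B$ occurs, so $|d^+(\set{I})|\ge 12-2=10$, which is part (ii). The genuinely delicate part of this argument is the exhaustive verification that the low and high groups cannot collide with each other and that no spurious within-group coincidence is possible; this is where the parity of $L$ (which rules out identifications such as $q_u+1=L-q_u$) and the bounds $q_l\ge 2$, hence $q_m\ge 3$ (which rule out degenerate identifications such as $q_u+1=L-q_m$, forcing $q_l=1$) are used, and some care is needed because all arithmetic is carried out modulo $L$.
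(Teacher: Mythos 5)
Your proof is correct and follows essentially the same route as the paper's: both arguments rest on the ordering \eqref{eq:f1} together with $q_l\geq 2$ and $q_u<L/2$, and on the observation that the only possible coincidences among the shifted differences are $q_l+1=q_m$ and $q_m+1=q_u$, which hold simultaneously exactly in case (i). The only difference is presentational: the paper exhibits $\{q_u+1,\,L-q_l+1\}$ as always-new elements and then two further new elements in the non-exceptional cases to get the bound $\geq 10$, whereas you organize the same coincidence analysis into the exact count $|d^+(\set{I})|=12-2\big([A]+[B]\big)$.
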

\begin{proof}
By the assumption that $q_u<L/2$, \eqref{eq:f} can be written as
\begin{equation}
q_l < q_m < q_u < L-q_u < L-q_m <L-q_l,
\label{eq:f1}
\end{equation}
and thus $|d^*(\set{I})|=6$.
Moreover, $q_l\geq 2$ and $q_u<L/2$ imply respectively that $L-q_l+1<L$ and $q_u+1<L-q_u$.
Then we have
\[
d^+(\set{I}) \supseteq d^*(\set{I}) \uplus \{q_u+1,L-q_l+1\}.
\]
Note that the notation $\uplus$ refers to disjoint union operation, which is used to emphasize that the two involved sets are disjoint.

If $q_l+1=q_m=q_u-1$, then $q_m=L/3$, and $d^+(\set{I})$ is exactly equal to $d^*(\set{I}) \uplus \{q_u+1,L-q_l+1\}$. 
Hence $|d^+(\set{I})|=8$ in this case.

If $q_l+1\neq q_m$, then $q_l+1$ and $L-q_m+1$ are included in $d^+(\set{I})$ but not $d^*(\set{I})$.
Similarly, if $q_m\neq q_u-1$, then $q_m+1$ and $L-q_u+1$ are in $d^+(\set{I})\setminus d^*(\set{I})$.
In either case, we obtain $|d^+(\set{I})|\geq 10$.
This completes the proof.
\qed
\end{proof}

For example, let $L=24$.
If $\set{I}=\{0,8,15\}$, then $d^*(\set{I})=\{7,8,9,15,16,17\}$ and $|d^+(\set{I})|=8$.
If $\set{I}=\{0,6,13\}$, then $d^*(\set{I})=\{6,7,11,13,17,18\}$ and $|d^+(\set{I})|=10$.

\begin{lemma}
Let $\set{I}$ be a non-equi-difference codeword in a code $\set{C}\in\SCAC(L,3)$ with even $L$ and $d^*(\set{I})=\{q_l,q_m,q_u,L-q_u,L-q_m,L-q_l\}$, where the three parameters $q_l, q_m, q_u$ satisfy $q_l+q_m+q_u=L$ and the inequality in \eqref{eq:f}.
If $q_u\geq L/2$, then
\begin{enumerate}
  \item $ |d^+(\set{I})| = 8 $ if $q_m=q_l+1=(L+2)/4$, $q_u=L/2$; and
  \item $|d^+(\set{I})| \geq 10$ otherwise.
\end{enumerate}
\label{lemma:nonc2}
\end{lemma}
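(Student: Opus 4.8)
The plan is to follow the template of the proof of Lemma~\ref{lemma:nonc1}, splitting the hypothesis $q_u\geq L/2$ into the two subcases $q_u=L/2$ and $q_u>L/2$ and, in each, computing $|d^+(\set{I})|$ from the number of consecutive pairs inside $d^*(\set{I})$. Because $\set{I}$ is a codeword of an SCAC, Proposition~\ref{prop:property}(i) forces $1\notin d^*(\set{I})$, hence $q_l\geq 2$ and $L-q_l\leq L-2$; thus every element of $d^*(\set{I})$ lies in $[2,L-1]$ and no modular wrap-around occurs when forming $d^+(\set{I})=d^*(\set{I})\cup(d^*(\set{I})+1)$. By inclusion--exclusion this gives $|d^+(\set{I})|=2|d^*(\set{I})|-c$, where $c$ is the number of elements $e\in d^*(\set{I})$ with $e-1\in d^*(\set{I})$; any such pair is realized by two entries that are \emph{adjacent} in the ascending order of $d^*(\set{I})$. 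The proof therefore reduces to deciding, in each subcase, which adjacent entries can differ by exactly $1$.

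First I would pin down the ascending order using $q_l+q_m+q_u=L$, \eqref{eq:f}, and $q_l\geq 2$. When $q_u=L/2$ we have $q_u=L-q_u$, so $|d^*(\set{I})|=5$ and the order is $q_l<q_m<L/2<L-q_m<L-q_l$. When $q_u>L/2$ we have $L-q_u=q_l+q_m<L/2<q_u$, so $|d^*(\set{I})|=6$ and the order is $q_l<q_m<L-q_u<q_u<L-q_m<L-q_l$. Each of these inequalities is an elementary consequence of the stated constraints.

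The core step is the enumeration of admissible consecutive pairs. I expect three observations to suffice. First, $q_m=q_l+1$ is equivalent to $L-q_l=(L-q_m)+1$, so the adjacencies $(q_l,q_m)$ and $(L-q_m,L-q_l)$ are consecutive together or not at all. Second, the adjacency $(L-q_u,q_u)$ can never be consecutive, since that would require $2q_u=L\pm1$, impossible for even $L$. Third, every remaining adjacency has gap exactly $q_l$ (for instance $(q_m,L-q_u)$ and $(q_u,L-q_m)$ when $q_u>L/2$, and the two adjacencies straddling $L/2$ when $q_u=L/2$, all have difference $q_l$), so consecutiveness would force $q_l=1$, which is excluded. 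Hence the only mechanism producing consecutive elements is $q_m=q_l+1$, and it always produces exactly two of them, giving $c\in\{0,2\}$.

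Assembling the pieces: in the subcase $q_u>L/2$ one gets $|d^+(\set{I})|=12-c\geq 10$ unconditionally, while in the subcase $q_u=L/2$ one gets $|d^+(\set{I})|=10-c$, which is $8$ precisely when $q_m=q_l+1$ and is $10$ otherwise. Substituting $q_m=q_l+1$ and $q_u=L/2$ into $q_l+q_m+q_u=L$ yields $q_l=(L-2)/4$ and $q_m=(L+2)/4$, matching the condition in case~(i). The main obstacle is the bookkeeping in the third observation above: one must verify carefully that the defining SCAC constraint $q_l\geq 2$ is exactly what eliminates the spurious consecutive pairs---in particular the two adjacencies around $L/2$ in the $q_u=L/2$ subcase, which would otherwise also give $c=2$ and break the clean characterization.
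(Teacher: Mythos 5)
Your proof is correct and follows essentially the same route as the paper's: the same split into the subcases $q_u=L/2$ and $q_u>L/2$, the same ordering of $d^*(\set{I})$, and the same key fact that $q_l\geq 2$ (via Proposition~\ref{prop:property}(i), since any spurious consecutive pair or the parity condition $2q_u=L+1$ would force $q_l=1$ or contradict evenness of $L$). The only difference is bookkeeping: the paper lower-bounds $|d^+(\set{I})|$ by exhibiting explicit elements of $d^+(\set{I})\setminus d^*(\set{I})$, whereas you count coincidences exactly via $|d^+(\set{I})|=2|d^*(\set{I})|-c$, which yields the same conclusions (and, in passing, the exact values $10$ or $12$ in the $q_u>L/2$ case).
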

\begin{proof}
We first consider $q_u>L/2$.
In this case, \eqref{eq:f} can be written as
\[
q_l < q_m < L-q_u < q_u < L-q_m <L-q_l.
\]
It is easy to see that $L-q_u+1$ and $L-q_l+1$ are in $d^+(\set{I})\setminus d^*(\set{I})$.
We now claim that $q_m+1$ and $q_u+1$ are also in $d^+(\set{I})\setminus d^*(\set{I})$.
Suppose the assertion is not true; that is, $q_m+1=L-q_u$.
By the assumption that $q_l+q_m+q_u=L$, we have $q_l=1$, which contradicts to Proposition~\ref{prop:property}(i).
Therefore, 
\[
d^+(\set{I}) \supseteq d^*(\set{I}) \uplus \{q_m+1,L-q_u+1,q_u+1,L-q_l+1\},
\]
and thus $|d^+(\set{I})|\geq 10$.

As for the case of $q_u=L/2$, \eqref{eq:f} can be written as
\[
q_l < q_m < q_u = L-q_u < L-q_m <L-q_l.
\]
By the same argument, $q_m+1,q_u+1$ and $L-q_l+1$ are in $d^+(\set{I})\setminus d^*(\set{I})$.
Then we have
\[
d^+(\set{I}) \supseteq d^*(\set{I}) \uplus \{q_m+1,q_u+1,L-q_l+1\}.
\]
If $q_l+1=q_m$, then $q_m$ must be equal to $(L+2)/4$ and $|d^+(\set{I})|=8$.
If $q_l+1< q_m$, then $q_l+1$ and $L-q_m+1$ will be in $d^+(\set{I})\setminus d^*(\set{I})$, and thus $|d^+(\set{I})|\geq 10$.
\qed

%
%
%
\end{proof}

For example, let $L=26$. 
If $\set{I}=\{0,6,13\}$, then $d^*(\set{I})=\{6,7,13,19,20\}$ and $|d^+(\set{I})|=8$.
If $\set{I}=\{0,5,13\}$, then $d^*(\set{I})=\{5,8,13,18,21\}$ and $|d^+(\set{I})|=10$.

\begin{proposition}
Let $\set{I}$ be a non-equi-difference codeword in a code $\set{C}\in\SCAC(L,3)$ with even $L$ such that $|d^+(\set{I})| < 10$ and has at least one rough tube.
Assume that $d^*(\set{I})=\{q_l,q_m,q_u,L-q_u,L-q_m,L-q_l\}$, where the three parameters $q_l, q_m, q_u$ satisfy $q_l+q_m+q_u=L$ and the inequality in \eqref{eq:f}.
Then, $q_m=q_l+1=(L+2)/4$, $q_u=L/2$.
\label{prop:tube}
\end{proposition}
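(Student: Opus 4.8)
The plan is to use Lemmas~\ref{lemma:nonc1} and~\ref{lemma:nonc2} to reduce the hypothesis $|d^+(\set{I})|<10$ to exactly two possible parameter configurations, and then invoke the rough-tube hypothesis to discard the one configuration that does not match the conclusion.

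First I would split on the sign of $q_u - L/2$, matching the hypotheses of the two lemmas. When $q_u \geq L/2$, Lemma~\ref{lemma:nonc2} says $|d^+(\set{I})| = 8$ is possible only when $q_m = q_l+1 = (L+2)/4$ and $q_u = L/2$, which is precisely the desired conclusion; since $|d^+(\set{I})| < 10$ rules out the alternative $|d^+(\set{I})| \geq 10$, this branch immediately gives what we want. When $q_u < L/2$, Lemma~\ref{lemma:nonc1} forces the equality case $q_l + 1 = q_m = q_u - 1 = L/3$ (so in particular $L \equiv 0 \pmod 3$), since otherwise $|d^+(\set{I})| \geq 10$. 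The whole problem therefore reduces to showing that this single residual configuration is incompatible with $\set{I}$ having a rough tube.

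Next I would compute $d^+(\set{I})$ explicitly in that residual case $q_l = L/3 - 1$, $q_m = L/3$, $q_u = L/3 + 1$. Here
\[
d^*(\set{I}) = \{L/3-1,\,L/3,\,L/3+1\} \cup \{2L/3-1,\,2L/3,\,2L/3+1\},
\]
two blocks of three consecutive integers, so $d^+(\set{I}) = d^*(\set{I}) + \{0,1\}$ consists of the two blocks of four consecutive integers $T(L/3-1,\,L/3+2)$ and $T(2L/3-1,\,2L/3+2)$. For $L > 9$ these two blocks stay disjoint and lie inside $\{2,3,\ldots,L-1\}$, so they are exactly the tubes of $d^+(\set{I})$. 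In each such block the endpoints differ by $3$, hence have opposite parity, so both tubes are flat and $d^+(\set{I})$ contains no rough tube, contradicting the standing hypothesis. This eliminates the case $q_u < L/2$ and leaves only $q_m = q_l + 1 = (L+2)/4$, $q_u = L/2$.

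I do not expect a serious obstacle here: the argument is essentially a bookkeeping reduction to the two extremal cases of the preceding lemmas followed by one explicit parity computation. The only points requiring a little care are confirming that the two four-element blocks neither merge nor overrun the ends of the interval $\{2,\ldots,L-1\}$ (a harmless lower bound on $L$), and observing that it is precisely the odd gap of $3$ between a block's endpoints that forces each tube to be flat rather than rough.
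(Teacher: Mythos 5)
Your proposal is correct and follows essentially the same route as the paper: invoke Lemmas~\ref{lemma:nonc1} and~\ref{lemma:nonc2} to reduce $|d^+(\set{I})|<10$ to the two configurations $q_m=q_l+1=q_u-1=L/3$ and $q_m=q_l+1=(L+2)/4,\ q_u=L/2$, then use the rough-tube hypothesis to discard the former. The only difference is that you carry out explicitly the parity computation (the two tubes $T(L/3-1,L/3+2)$ and $T(2L/3-1,2L/3+2)$ are flat since their endpoints differ by $3$), which the paper compresses into the single remark ``by the definition of rough tubes, only the latter one has at least one rough tube.''
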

\begin{proof}
By Lemma~\ref{lemma:nonc1} and Lemma~\ref{lemma:nonc2}, there are two possible codewords satisfying $|d^+(\set{I})| < 10$.
They are the codeword with $q_m=q_l+1=q_u-1=L/3$ and that with $q_m=q_l+1=(L+2)/4, q_u=L/2$, and both have $|d^+(\set{I})| = 8$.
By the definition of rough tubes, only the latter one has at least one rough tube.
\qed
\end{proof}

\subsection{Equi-difference Codewords}
We start this subsection with the following known result on equi-difference codewords.
\begin{lemma}[\cite{LT05}]
Let $\set{C}\in\CAC(L,3)$ and $\set{I}$ be one of its equi-difference codewords.
Then we have
\[
 |d^*(\set{I})|  = \begin{cases}
 2  & \text{ if } \ g=L/3, \\
 3  & \text{ if } \ g=L/4, \\
 4  & \text{ otherwise }.
 \end{cases}
\]
\label{lemma:g1}
\end{lemma}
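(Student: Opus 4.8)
The plan is to compute the difference set of $\set{I}$ directly and then count how many of its nonzero entries are distinct modulo $L$. Since $\set{I}$ is equi-difference of weight three, I would write $\set{I}=\{0,g,2g\}$ for its generator $g$, normalized so that $1\le g\le L/2$. Forming all pairwise differences $a-b\pmod L$ with $a,b\in\set{I}$ and discarding the zero, one immediately obtains
\[
d^*(\set{I})=\{g,\,2g,\,L-g,\,L-2g\}\pmod L .
\]
Thus the entire question reduces to deciding when two of these four residues coincide, and the CAC hypothesis plays no role beyond guaranteeing that $\set{I}$ is a genuine weight-three codeword.

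First I would record the distinctness of $0,g,2g$, which is exactly the weight-three requirement: it forces $g\ne 0$ and $2g\not\equiv 0\pmod L$, i.e.\ $g\ne L/2$. Under the normalization $g\le L/2$ this yields $g<L/2$, so the equalities $g=L-g$ and $L-g=L-2g$ (both of which would require $g\in\{0,L/2\}$) can never occur. Consequently the only surviving coincidences among $\{g,2g,L-g,L-2g\}$ arise from the congruences $g\equiv L-2g$, $2g\equiv L-g$, and $2g\equiv L-2g$; the first two are identical, so everything comes down to $3g\equiv 0$ and $4g\equiv 0\pmod L$.

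Next I would split according to these two congruences. If $3g\equiv 0\pmod L$ with $0<g<L/2$, then necessarily $g=L/3$ (so this case arises only when $3\mid L$); substituting gives $2g=L-g$ and $L-2g=g$, so the four residues collapse to $\{L/3,2L/3\}$ and $|d^*(\set{I})|=2$. If instead $4g\equiv 0$ but $3g\not\equiv 0$, then $g=L/4$ (requiring $4\mid L$), and here $2g=L-2g=L/2$ while $g$ and $L-g$ stay distinct, producing $\{L/4,L/2,3L/4\}$ and $|d^*(\set{I})|=3$. In the remaining case neither congruence holds, no two residues agree, and $|d^*(\set{I})|=4$, which matches the three-way split in the statement.

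The only point requiring genuine care, and hence the main obstacle, is the bookkeeping that makes the above exhaustive and mutually exclusive: I must verify that the six pairwise-equality conditions reduce, once $g\ne 0,L/2$ is imposed, to exactly the two congruences $3g\equiv 0$ and $4g\equiv 0$, and that $g=L/3$ and $g=L/4$ cannot hold simultaneously (they give distinct values). This is a short finite check rather than a delicate argument, so the lemma follows directly once it is carried out.
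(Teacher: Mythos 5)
Your proof is correct, and it is worth noting that the paper itself offers no proof at all for this lemma: it is imported verbatim from Levenshtein--Tonchev \cite{LT05} as a known result. Your argument is therefore a self-contained elementary verification rather than a reproduction of anything in the paper. The computation is sound: writing $\set{I}=\{0,g,2g\}$ with the paper's normalization $g\le L/2$, you get $d^*(\set{I})=\{g,2g,L-g,L-2g\}$, the weight-three condition rules out $g\equiv 0$ and $2g\equiv 0$, and the six pairwise coincidence conditions collapse to $3g\equiv 0$ (forcing $g=L/3$, giving size $2$) and $4g\equiv 0$ (forcing $g=L/4$ under $g<L/2$, giving size $3$ since $2g=L-2g=L/2$), which are mutually exclusive because $3g\equiv 0$ and $4g\equiv 0$ together would force $g\equiv 0$. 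Your observation that the CAC hypothesis is irrelevant here is also accurate: the count $|d^*(\set{I})|$ depends only on the single codeword, not on the disjointness condition \eqref{eq:CAC} between codewords. The one point stated slightly loosely is your claim that $L-g=L-2g$ ``would require $g\in\{0,L/2\}$''; in fact it requires exactly $g\equiv 0$, but since that is contained in the set you excluded, the argument is unaffected.
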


Lemma~\ref{lemma:g1} obviously holds for the case of $\set{C}\in\SCAC(L,3)$ due to $\SCAC(L,3)\subseteq\CAC(L,3)$.
A codeword $\set{I}$ in a CAC or SCAC of weight three is called {\em exceptional} \cite{Momihara07} if $ |d^*(\set{I})| < 4$.
Therefore, there are at most two exceptional equi-difference codewords in a CAC or SCAC of weight three.

\begin{lemma}[{\cite{SCAC}}]
Let $\set{I}$ be a non-dispersive equi-difference codeword with generator $g$ in an code in $\SCAC(L,\omega)$. If there are $k$ ($k>0$) pairs of consecutive elements in $d^*(\set{I})$, then we have
\begin{enumerate}
\item $ (2w-k-1)g \equiv \pm 1$ (mod $L$) with $k\leq w-1$;
  \item $g$ and $2w-k-1$ are both relatively prime to $L$;
  \item $\set{I}$ is non-exceptional.
\end{enumerate}
\label{thm:equi3}
\end{lemma}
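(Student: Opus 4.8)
The plan is to work from the explicit description $d^*(\set{I})=\{ig \bmod L : i\in S\}$, where $S=\{\pm1,\pm2,\ldots,\pm(w-1)\}$, and to convert the datum ``$k$ consecutive pairs'' into an arithmetic statement about $g^{-1}$. First I would note that a pair of consecutive elements $\{a,a+1\}\subseteq d^*(\set{I})$ is exactly a pair of indices $i,j\in S$ with $jg\equiv ig+1\pmod{L}$, i.e.\ $(j-i)g\equiv 1\pmod{L}$. The mere existence of one such pair (ensured by $k>0$) forces any common divisor of $g$ and $L$ to divide $(j-i)g-mL=1$ for a suitable integer $m$; hence $\gcd(g,L)=1$, which is the ``$g$'' part of (ii). Once $g$ is a unit modulo $L$, a coincidence $ig\equiv i'g$ forces $i\equiv i'\pmod{L}$, and since the indices lie in $\{-(w-1),\ldots,w-1\}$ (a range of width $2(w-1)<L$) the $2(w-1)$ values $\pm g,\pm2g,\ldots,\pm(w-1)g$ are pairwise distinct; thus $|d^*(\set{I})|=2(w-1)$ and $\set{I}$ is non-exceptional, giving (iii).

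For (i) I would put $n:=g^{-1}\bmod L$ and observe that every consecutive pair satisfies $j-i\equiv n\pmod{L}$; as $|j-i|\le 2(w-1)<L$, the integer $j-i$ equals $n$ or $n-L$, so $\min(n,L-n)\le 2(w-1)$. The two signs correspond precisely to the $\pm$ in the conclusion, so it suffices to treat the branch $j-i=n\le 2(w-1)$. The crucial structural observation is that $n$ cannot be small: if $n\le w-1$ then $n\in S$ and $ng\bmod L=1$ would belong to $d^*(\set{I})$, contradicting Proposition~\ref{prop:property}(i); hence $w\le n\le 2(w-1)$. It then remains to count the consecutive pairs, i.e.\ the indices $i\in S$ with $i+n\in S$. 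The constraints $i\ge-(w-1)$ and $i+n\le w-1$ confine $i$ to $[-(w-1),\,w-1-n]$, and because $n\ge w$ every such $i$ is automatically negative (so $i\in S$) and sends $i+n$ into $\{1,\ldots,w-1\}\subseteq S$; consequently there are exactly $2w-1-n$ of them. Therefore $k=2w-1-n$, that is $n=2w-k-1$, whence $(2w-k-1)g=ng\equiv 1\pmod{L}$ in this branch and $\equiv-1$ in the other. Since $n$ ranges over $[w,2(w-1)]$ the inequality $1\le k\le w-1$ is immediate, and the coprimality of $2w-k-1$ follows because $(2w-k-1)g\equiv\pm1$ exhibits $2w-k-1$ as a unit modulo $L$, completing (ii).

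The step I expect to be the main obstacle is the exact count $2w-1-n$ of consecutive pairs together with the bookkeeping that all of them arise from the single index-difference $n$ rather than from both $n$ and $n-L$. The count rests on the clean interval argument above and on the exclusion of the range $n\le w-1$ via Proposition~\ref{prop:property}(i); the single-difference claim needs either the standing assumption that $L$ is large compared with $w$ (so that $n$ and $L-n$ are not both at most $2(w-1)$) or a direct check of the few exceptional small lengths. By contrast, the coprimality and non-exceptionality assertions come essentially for free once one notices that a consecutive pair already makes $g$ invertible modulo $L$.
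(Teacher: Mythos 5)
The paper itself contains no proof of this lemma---it is imported verbatim from \cite{SCAC}---so your proposal can only be judged on its own merits, and on those merits it follows the natural (and surely the original) line: deriving $\gcd(g,L)=1$ from one consecutive pair, excluding $n\leq \omega-1$ via Proposition~\ref{prop:property}(i), and counting pairs by the interval argument are all correct. However, two points are left open. The lesser one is your parenthetical assertion $2(\omega-1)<L$, which underpins both the injectivity of $i\mapsto ig$ on $S$ (hence conclusion (iii)) and the uniqueness of the representation of each element of $d^*(\set{I})$ as $ig$ with $i\in S$. This is true but needs an argument: if $L\leq 2(\omega-1)$, then the $2\omega-1\geq L+1$ consecutive integers $-(\omega-1),\ldots,\omega-1$ cover every residue class modulo $L$, so (using $\gcd(g,L)=1$) we would get $d(\set{I})=\mathbb{Z}_L$ and in particular $1\in d^*(\set{I})$, contradicting Proposition~\ref{prop:property}(i).

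The more serious point is the one you yourself flag: the claim that all $k$ pairs arise from the single integer difference $n$ rather than from both $n$ and $n-L$. This cannot be repaired by ``a direct check of the few exceptional small lengths,'' because for such lengths the lemma is simply false. Take $L=7$, $\omega=3$ and $\set{I}=\{0,2,4\}$, which is equi-difference with $g=2$ and satisfies $\{\set{I}\}\in\SCAC(7,3)$ since $d^*(\set{I})=\{2,3,4,5\}$ avoids $1$ and $6$: here $n=4$ and $L-n=3$ are both at most $2(\omega-1)$, consecutive pairs of both flavors occur, and indeed $k=3>\omega-1$ while $(2\omega-k-1)g=4\not\equiv\pm1\pmod 7$. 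So a correct proof must assume something like $L>4(\omega-1)$ (then $n\leq 2(\omega-1)$ forces $L-n>2(\omega-1)$, and your count is exact); such a hypothesis is implicit in \cite{SCAC} and is harmless here, since the present paper invokes the lemma only for $\omega=3$ and even $L\geq 18$. With the first point patched and the largeness hypothesis made explicit, your argument is complete.
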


Following Lemma~\ref{thm:equi3} we have:
\begin{corollary}
Let $\set{I}$ be a non-dispersive equi-difference codeword with generator $g$ in a code in $\SCAC(L,3)$ with even $L$.
Then there are two pairs of consecutive elements in $d^*(\set{I})$ and
\begin{equation}
 g= \frac{L+1}{3} \ or \ \frac{L-1}{3}. \label{eq:g2}
\end{equation}
\label{lemma:g2}
\end{corollary}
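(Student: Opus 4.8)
The plan is to specialize Lemma~\ref{thm:equi3} to $w=3$ and read off the two assertions from the three numbered facts it supplies. Since $\set{I}$ is assumed non-dispersive, $d^*(\set{I})$ contains at least one pair of consecutive elements, so the integer $k$ appearing in Lemma~\ref{thm:equi3} satisfies $k\geq 1$. Part~(i) of that lemma gives $k\leq w-1=2$, hence $k\in\{1,2\}$, and its governing congruence becomes $(2w-k-1)g=(5-k)g\equiv\pm 1\pmod{L}$. Everything therefore reduces to deciding which of $k=1$ or $k=2$ can actually occur and then solving the resulting congruence for $g$.

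The key step is to eliminate $k=1$ by exploiting the hypothesis that $L$ is even. If $k=1$, then $2w-k-1=4$, and part~(ii) of Lemma~\ref{thm:equi3} asserts that $4$ is relatively prime to $L$; this is impossible for even $L$ since $\gcd(4,L)\geq 2$. Hence $k=2$, which is precisely the first assertion of the corollary: $d^*(\set{I})$ contains exactly two pairs of consecutive elements. I expect this coprimality contradiction to be the crux of the argument, with the remaining work being routine bounding.

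With $k=2$ established, the congruence reduces to $3g\equiv\pm 1\pmod{L}$, and it remains to pin down $g$ using its admissible range. By Proposition~\ref{prop:property}(i) we have $1\notin d^*(\set{I})$, so $g\geq 2$ and thus $3g\geq 6$; combined with the standing convention $g\leq L/2$ this confines $3g$ to the window $6\leq 3g\leq 3L/2$. In this window the representative $1$ is excluded (since $3g\geq 6$), and the only surviving representatives of $\pm 1$ modulo $L$ are $L+1$ and $L-1$, because $2L\pm 1>3L/2$ for $L>2$. Consequently $3g=L+1$ or $3g=L-1$, that is, $g=(L+1)/3$ or $g=(L-1)/3$, which is the second assertion and completes the proof.
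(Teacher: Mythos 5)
Your proposal is correct and follows essentially the same route as the paper's own proof: both apply Lemma~\ref{thm:equi3} with $w=3$, use the evenness of $L$ together with part~(ii) (coprimality of $2w-k-1=4$ with $L$) to rule out $k=1$ and force $k=2$, and then solve $3g\equiv\pm 1\pmod{L}$ under the convention $g\leq L/2$ to get $g=(L\pm 1)/3$. Your handling of the final congruence is in fact slightly more careful than the paper's (which bounds $3g<2L$ and leaves the exclusion of $3g=2L-1$ implicit), but this is a presentational refinement, not a different argument.
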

\begin{proof}
Suppose there are $k$ ($\geq 1$) pairs of consecutive elements in $d^*(\set{I})$.
Since $L$ is even and $\omega=3$, by Lemma~\ref{thm:equi3}(i)--(ii), we have $k=2$, $\gcd(g,L)=1$ and
\[
 3g\equiv \pm 1 \text{ (mod L)}.
\]
By the assumption that $g\leq L/2$, we have $3g<2L$, and thus the above equation can be reduced to
\[
 g= (L+1)/3 \ \ or \ \ (L-1)/3.
\]
Note that $\set{I}$ is non-exceptional by Lemma~\ref{thm:equi3} (iii).
\qed
\end{proof}

Now we are ready to derive results on $|d^+(\set{I})|$ for a different type of equi-difference $\set{I}$ as follows.

\begin{theorem}
Let $\set{I}$ be an equi-difference codeword with generator $g$ in a code in $\SCAC(L,3)$ with even $L$.
Then we have
\[
 |d^+(\set{I})|  = \begin{cases}
 4  & \text{ if } \ g=\frac{L}{3}, \\
 6  & \text{ if } \ g=\frac{L}{4} \ or \ \frac{L+1}{3} \ or \ \frac{L-1}{3}, \\
 8  & \text{ otherwise }.
 \end{cases}
\]
\label{thm:cs}
\end{theorem}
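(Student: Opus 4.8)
The plan is to reduce the computation to the difference set $d^*(\set{I})$ together with the effect of the shift $+\{0,1\}$. Since $\set{I}$ is equi-difference with generator $g$, I would write $\set{I}=\{0,g,2g\}$ and record $d^*(\set{I})=\{g,2g,L-g,L-2g\}$, whose cardinality ($2$, $3$, or $4$ according to $g=L/3$, $g=L/4$, or otherwise) is already supplied by Lemma~\ref{lemma:g1}. The one structural remark I would establish first is that, for a codeword of an SCAC, Proposition~\ref{prop:property}(i) excludes $1$ and $L-1$ from $d^*(\set{I})$, so $0\in d(\set{I})$ is never consecutive to a nonzero difference; hence $\set{I}$ is dispersive exactly when $d^*(\set{I})$ has no two consecutive integers. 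This identification lets me move freely between dispersiveness (defined through $d(\set{I})$) and the count of consecutive pairs in $d^*(\set{I})$ used by Corollary~\ref{lemma:g2}, and it is what justifies applying the fact $|d^+(\set{I})|=2|d^*(\set{I})|$ for dispersive $\set{I}$.

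With this in hand, the dispersive cases are immediate. For $g=L/3$ the differences $L/3,2L/3$ are spaced $L/3$ apart, and for $g=L/4$ the differences $L/4,L/2,3L/4$ are spaced $L/4$ apart; in both regimes no two differences are consecutive, so $\set{I}$ is dispersive and $|d^+(\set{I})|=2|d^*(\set{I})|$ yields $4$ and $6$. For the generic value $g\notin\{L/3,L/4,(L\pm1)/3\}$ I would apply the contrapositive of Corollary~\ref{lemma:g2}: since the only non-dispersive equi-difference codewords of an SCAC with even $L$ have $g=(L\pm1)/3$, this $\set{I}$ is forced to be dispersive, giving $|d^+(\set{I})|=2\cdot4=8$.

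The remaining case $g=(L-1)/3$ or $(L+1)/3$ is where the real bookkeeping sits. Here Corollary~\ref{lemma:g2} guarantees exactly two pairs of consecutive elements in $d^*(\set{I})$, which I would make explicit by substitution: for $g=(L-1)/3$ one checks $L-2g=g+1$ and $L-g=2g+1$, so $d^*(\set{I})=\{g,g+1,2g,2g+1\}$ consists of the two length-two runs $\{g,g+1\}$ and $\{2g,2g+1\}$, the value $g=(L+1)/3$ being entirely analogous. Applying $+\{0,1\}$ lengthens each run by one, producing $d^+(\set{I})=\{g,g+1,g+2\}\cup\{2g,2g+1,2g+2\}$. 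The hard part, and the only genuine subtlety, will be to confirm that these two runs stay disjoint and do not wrap past $0$: disjointness needs $g+2<2g$ and the absence of wraparound needs $2g+2<L$, both of which hold because $2\le g\le L/2$ keeps $g$ away from its extreme values for the relevant lengths. Once these inequalities are verified the two runs contribute six distinct integers, so $|d^+(\set{I})|=6$, which completes every row of the table.
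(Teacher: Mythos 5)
Your proof is correct and takes essentially the same route as the paper: both isolate the non-dispersive case via Corollary~\ref{lemma:g2} (forcing $g=(L\pm1)/3$ and yielding $|d^+(\set{I})|=6$) and settle every dispersive generator with the identity $|d^+(\set{I})|=2|d^*(\set{I})|$ combined with Lemma~\ref{lemma:g1}. The only difference is that you spell out what the paper merely asserts---the explicit runs $\{g,g+1,g+2\}$ and $\{2g,2g+1,2g+2\}$ with their disjointness/non-wraparound checks, and the direct verification that $g=L/3$ and $g=L/4$ give dispersive codewords---which is added care, not a change of method.
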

\begin{proof}
Corollary~\ref{lemma:g2} promises that there is only one non-dispersive equi-difference codeword: $g=(L+1)/3$ or $(L-1)/3$. 
In either case, we always have $|d^+(\set{I})| = 6$.

We now consider that $\set{I}$ is dispersive.
It is obvious that $|d^+(\set{I})|=2|d^*(\set{I})|$.
Then the result follows from Lemma~\ref{lemma:g1}.

%
%
%
\qed
\end{proof}

As proved in Lemma~\ref{lemma:nonc1},~Lemma~\ref{lemma:nonc2} and Theorem~\ref{thm:cs}, we conclude that in an SCAC with even length and weight three there are four types of codeword $\set{I}$ satisfying $|d^+(\set{I})|<8$, each of which is equi-difference.
We classify them in Table~\ref{table:w8} with notations $E_1,E_2,N_1,N_2$, and make an illustration by the following example.

\begin{table}[h]
\[
\begin{array}{|c|c|c|} \hline
\text{Codeword} & \text{Generator} & |d^+(\set{I})|  \\ \hline \hline
\set{I}_{E_1} & L/4  & 6   \\ \hline
\set{I}_{E_2} & L/3  & 4   \\   \hline
\set{I}_{N_1} & (L-1)/3  & 6   \\   \hline
\set{I}_{N_2} & (L+1)/3  & 6   \\ \hline
\end{array}
\]
\caption{The four types of codeword $\set{I}$ with even $L$ and $|d^+(\set{I})|<8$.}
\label{table:w8}
\end{table}

For example, let $L=28$.
Then $\set{I}_1=\{0, 2, 4\}$, $\set{I}_2=\{0,7, 14\}$ and $\set{I}_3=\{0,9,18\}$, the equi-difference codewords generated by 2, 7 and 9 respectively, form a code in $\SCAC(28,3)$.
We have $d^*(\set{I}_1) = \{2,4,24,26\}$, $d^*(\set{I}_2) = \{7,14,21\}$ and $d^*(\set{I}_3) = \{9,10,18,19\}$.
Notice that $|d^+(\set{I}_2)|=6$ as the generator $g=7=L/4$, and $|d^+(\set{I}_3)|=6$ as the generator $g=9=(L-1)/3$ (i.e., $\set{I}_3$ is non-dispersive).

\section{Upper Bounds on $M_S(L,3)$} \label{sec:bound}
Following the result of $|d^+(\set{I})|$ for different types of codewords in Section 3, we establish upper bounds on $M_S(L,3)$ under different conditions of $L$.
Since any codeword $\set{I}$ in an SCAC with even length and weight three has $|d^+(\set{I})|\geq 8$ except the four cases listed in Table~\ref{table:w8}, we first discuss $M_S(L,3)$ according to the presence of the four codewords: $\set{I}_{E_1}$, $\set{I}_{E_2}$, $\set{I}_{N_1}$ and $\set{I}_{N_2}$.
In the following lemma, therefore, $L$ is classified according to its remainder after dividing $12$.
Note that we only consider $L \geq 18$ in this section as $M_S(L,3)=1$ if $L<18$ (see \cite{SCAC}).

\begin{lemma}\label{lemma:upperbound_1}
Let $L \geq 18$. Then,
\[
M_S(L,3) \leq 
\begin{cases}
\lfloor(L+4)/8\rfloor & \text{ if } L\equiv 0 \text{ (mod $12$)}, \\
\lfloor(L+2)/8\rfloor & \text{ if } L\equiv 4,6,8 \text{ (mod $12$)}, \\
\lfloor L/8\rfloor & \text{ if } L\equiv 2,10 \text{ (mod $12$)}.
\end{cases}
\]
\end{lemma}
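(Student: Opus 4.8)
The plan is to turn the per-codeword lower bounds on $|d^+(\set{I})|$ from Section~\ref{sec:property} into a bound on $M=|\set{C}|$ by way of the global packing constraint
\[
\sum_{\set{I}\in\set{C}} |d^+(\set{I})| \le L-2 ,
\]
which is exactly the $\lambda=0$ instance of Lemma~\ref{lemma:gap}, and which also follows directly from the pairwise disjointness in Proposition~\ref{prop:property}(ii) together with the containment \eqref{eq:d+_union}. Accordingly, I would first partition $\set{C}$ into the four exceptional codewords of Table~\ref{table:w8}, which are the only ones with $|d^+(\set{I})|<8$, and the remaining codewords, each of which satisfies $|d^+(\set{I})|\ge 8$.

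Next I would observe that each exceptional type is pinned down by a single generator ($L/4$, $L/3$, $(L-1)/3$, $(L+1)/3$), so it contributes at most one codeword to $\set{C}$; letting $n_{E_1},n_{E_2},n_{N_1},n_{N_2}\in\{0,1\}$ be the corresponding indicators and reading off the values $6,4,6,6$ from Table~\ref{table:w8}, the total splits as
\[
\sum_{\set{I}\in\set{C}} |d^+(\set{I})| \ge 8M - 2n_{E_1} - 4n_{E_2} - 2n_{N_1} - 2n_{N_2},
\]
the coefficient of each indicator being precisely its deficit $8-|d^+(\set{I})|$. Feeding this into the packing constraint gives
\[
M \le \frac{L-2 + 2n_{E_1}+4n_{E_2}+2n_{N_1}+2n_{N_2}}{8}.
\]

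The remaining work is an arithmetic case split on $L \bmod 12$ determining which generators are integral, hence which exceptional types can occur. Since $\set{I}_{E_1}$ requires $4\mid L$ while $\set{I}_{E_2},\set{I}_{N_1},\set{I}_{N_2}$ force $L\equiv 0,1,2 \pmod 3$ respectively, at most one of the latter three is ever admissible for a fixed $L$, and $\set{I}_{E_1}$ is controlled independently by $L\bmod 4$. Maximizing the numerator over the admissible indicators yields a saving of $6$ when $L\equiv 0$, of $4$ when $L\equiv 4,6,8$, and of $2$ when $L\equiv 2,10 \pmod{12}$; substituting and using that $M$ is an integer produces the three displayed upper bounds. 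I expect the only delicate point to be the modular bookkeeping---verifying in each residue class exactly which types survive and that the resulting floor agrees with the stated value---rather than any analytic difficulty, since for an upper bound one needs only that the true savings cannot exceed the maximal admissible savings, so no simultaneous realizability of the exceptional codewords has to be checked.
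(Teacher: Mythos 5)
Your proposal is correct and follows essentially the same route as the paper: the packing constraint $\sum_{\set{I}\in\set{C}}|d^+(\set{I})|\le L-2$ from Proposition~\ref{prop:property}, the deficit accounting for the four exceptional types of Table~\ref{table:w8} (at most one codeword each, since translates share the same $d^+$ set), and a residue-class analysis modulo $12$ to determine which types are admissible. The paper writes out only the case $L\equiv 0\pmod{12}$ and declares the rest analogous, whereas you state the unified indicator inequality explicitly, but the underlying argument is identical.
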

\begin{proof}
Let $\set{C}$ be a code in $\SCAC(L,3)$ with $|\set{C}|=M=M_S(L,3)$.
Assume that the numbers of codewords $\set{I}_{E_1}, \set{I}_{E_2}, \set{I}_{N_1}, \set{I}_{N_2}$ in $\set{C}$ are $e_1, e_2, n_1, n_2$, respectively.

We first consider the case of $L\equiv 0$ (mod $12$).
In this case, $e_1\leq 1$, $e_2\leq 1$ and $n_1=n_2=0$ by Table~\ref{table:w8}.
By Proposition~\ref{prop:property}, we have
\begin{equation}\label{eq:L_lowerbound}
L\geq 2+\sum_{\set{I}\in\set{C}} |d^+(\set{I})|.
\end{equation}
Since $|d^+(\set{I}_{E_1})|=6$, $|d^+(\set{I}_{E_2})|=4$, and $|d^+(\set{I})|\geq 8$ if $\set{I}$ is neither $\set{I}_{E_1}$ nor $\set{I}_{E_2}$, by \eqref{eq:L_lowerbound} we have
\begin{align*}
L &\geq 2+6e_1+4e_2+8\big(M-(e_1+e_2)\big) \\
  &= 2 + 8M - 2e_1 - 4e_2 \\
  &\geq 2 + 8M - 2 -4 = 8M-4.
\end{align*}
Hence $M\leq \lfloor(L+4)/8\rfloor$.

The other five cases can be dealt with in the same way.
Then we complete the proof.
\qed
\end{proof}

%
%
%
%

In the following lemma, we investigate the case of $L\equiv 12$ (mod $24$) in more detail.

\begin{lemma}\label{lemma:upperbound_2}
Let $L \geq 18$.
If $L\equiv 12$ (mod $24$), then
\[
M_S(L,3) \leq (L-4)/8.
\]
\end{lemma}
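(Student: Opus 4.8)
The plan is to show that the bound of Lemma~\ref{lemma:upperbound_1} is never attained when $L\equiv 12$ (mod $24$), so that it can be sharpened by one. Since $L\equiv 12$ (mod $24$) implies $L\equiv 0$ (mod $12$), Lemma~\ref{lemma:upperbound_1} already gives $M_S(L,3)\leq\lfloor(L+4)/8\rfloor=(L+4)/8$; writing $L=24s+12$ one has $(L+4)/8=3s+2$ whereas the target is $(L-4)/8=3s+1$. Thus it suffices to rule out the single remaining value $M:=M_S(L,3)=(L+4)/8$, and I would argue by contradiction, assuming that a code $\set{C}\in\SCAC(L,3)$ with $|\set{C}|=(L+4)/8$ exists.

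First I would pin down the structure forced by equality, reusing the counting in the proof of Lemma~\ref{lemma:upperbound_1}. Let $e_1,e_2,n_1,n_2$ count the codewords of each type in Table~\ref{table:w8}. Since $3\mid L$ we have $n_1=n_2=0$, while trivially $e_1,e_2\leq 1$. Combining $|d^+(\set{I}_{E_1})|=6$, $|d^+(\set{I}_{E_2})|=4$ and $|d^+(\set{I})|\geq 8$ otherwise with the inequality $L\geq 2+\sum_{\set{I}\in\set{C}}|d^+(\set{I})|$ (Proposition~\ref{prop:property} and Lemma~\ref{lemma:gap} with $\lambda=0$), the choice $M=(L+4)/8$ forces every inequality to equality. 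Hence $e_1=e_2=1$, every remaining codeword has $|d^+(\set{I})|=8$, and $\sum_{\set{I}\in\set{C}}|d^+(\set{I})|=L-2$. Because the sets $d^+(\set{I})$ are pairwise disjoint and contained in $\{2,\dots,L-1\}$ by Proposition~\ref{prop:property}, this last equality forces $\bigcup_{\set{I}\in\set{C}}d^+(\set{I})=\{2,\dots,L-1\}$: the shifted difference sets tile the interval perfectly, leaving no integer of $[2,L-1]$ uncovered.

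The crux is then to exhibit a solitary gap, contradicting this perfect tiling via Lemma~\ref{lemma:gap}. I would focus on the forced codeword $\set{I}_{E_1}=\{0,L/4,L/2\}$, whose shifted difference set has smallest element $L/4$, so that $G_{E_1}(2,L/4-1)$ is a genuine gap of $d^+(\set{I}_{E_1})$. Here the parity hypothesis is decisive: $L\equiv 12$ (mod $24$) gives $L/4=6s+3$ odd, hence $L/4-1$ is even and the gap $G_{E_1}(2,L/4-1)$ is E-rough. In the perfect-tiling situation the union $\bigcup_{\set{I}\in\set{C}}d^+(\set{I})$ consists of the single tube $T(2,L-1)$, which does not fit inside $[2,L/4-1]$; therefore no E-rough tube of the union lies within $G_{E_1}(2,L/4-1)$, and this gap is solitary. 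Applying Lemma~\ref{lemma:gap} to $\set{I}_{E_1}$ with $\lambda\geq 1$ yields $L\geq 2+1+\sum_{\set{I}\in\set{C}}|d^+(\set{I})|=L+1$, a contradiction. This excludes $M=(L+4)/8$ and leaves $M_S(L,3)\leq(L-4)/8$.

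The main obstacle I anticipate is justifying that the E-rough gap is genuinely solitary, that is, that no coincidental overlap among the other codewords' difference sets manufactures an E-rough tube inside $[2,L/4-1]$. The perfect-tiling reduction resolves this cleanly, since it collapses the union into one tube and eliminates all interior tubes; without that reduction the solitariness would demand a finer case analysis of how tubes merge across codewords. A secondary point worth verifying is that $L\equiv 12$ (mod $24$) genuinely forces $e_1=1$ (so that $\set{I}_{E_1}$ is present) together with $L/4$ odd, which is precisely what makes the argument specific to this residue class and consistent with the attainability of the bound of Lemma~\ref{lemma:upperbound_1} in other cases.
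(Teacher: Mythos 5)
Your reduction to the extremal case is sound: assuming $M_S(L,3)=(L+4)/8$ does force $e_1=e_2=1$, $|d^+(\set{I})|=8$ for every remaining codeword, and the perfect tiling $\bigcup_{\set{I}\in\C}d^+(\set{I})=\{2,\ldots,L-1\}$. The flaw is the step where you declare $G_{E_1}(2,L/4-1)$ solitary \emph{because} the union collapses to the single tube $T(2,L-1)$. You are reading ``tube in $\bigcup_{\set{I}\in\C}d^+(\set{I})$'' as ``maximal interval of the union set,'' and under that reading Lemma~\ref{lemma:gap} is false, so it cannot be invoked. Indeed, under that reading every gap of every codeword in any perfectly tiling SCAC is vacuously solitary, so Lemma~\ref{lemma:gap} would forbid any perfect tiling in which some codeword has a rough gap; such codes exist. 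For instance $\C=\{\{0,2\},\{0,4\}\}\in\SCAC(9,2)$: here $d^+(\{0,2\})=\{2,3,7,8\}$ and $d^+(\{0,4\})=\{4,5,6\}$ tile $\{2,\ldots,8\}$ perfectly, the codeword $\{0,2\}$ has the E-rough gap $G(4,6)$, no tube of the union (there is only $T(2,8)$) is included in it, and the inequality of Lemma~\ref{lemma:gap} with $\lambda=1$ would read $9\geq 2+1+7=10$. The lemma's own proof does not cover this situation precisely because of the ``straddling'' phenomenon you dismissed: when a gap is fully covered, the covering tube of the union sticks out past both ends of the gap, and the parity count in that proof only works for tubes lying inside the gap---which must therefore be understood as tubes of the individual sets $d^+(\set{I}_k)$, not of the union. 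In your perfect-tiling scenario the gap $[2,L/4-1]$ is fully covered, hence partitioned into complete tubes of the other codewords, and a parity count shows at least one of those individual tubes must be E-rough; nothing in your argument rules this out.

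What is missing is exactly the paper's key ingredient: showing that in the forced configuration no codeword can have a rough tube at all. This is where Proposition~\ref{prop:tube} and the congruence enter: since $4\mid L$, the quantity $(L+2)/4$ is not an integer, so any codeword possessing a rough tube has $|d^+(\set{I})|\geq 10$, contradicting the forced $|d^+(\set{I})|\leq 8$; and all equi-difference codewords here have only flat two-element tubes ($N_1,N_2$ are excluded by $3\mid L$). Once this is in place your contradiction is immediate and needs no appeal to Lemma~\ref{lemma:gap}: the gap $G_{E_1}(2,L/4-1)$ has odd size $L/4-2$, yet in a perfect tiling it would be partitioned into complete flat tubes, each of even size. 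So your architecture (extremal forcing, then a parity obstruction in one gap of $\set{I}_{E_1}$) is salvageable and is a clean variant of the paper's two-case counting, but as written the proof rests on an application of Lemma~\ref{lemma:gap} outside its range of validity, and the decisive fact---rough tubes cost $|d^+|\geq 10$ when $L\equiv 12 \pmod{24}$---never appears.
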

\begin{proof}
Similar to the setting in the proof of Lemma~\ref{lemma:upperbound_1}, let $\set{C}$ be a code in $\SCAC(L,3)$ with $|\set{C}|=M=M_S(L,3)$ and assume that the numbers of codewords $\set{I}_{E_1}, \set{I}_{E_2}, \set{I}_{N_1}, \set{I}_{N_2}$ in $\set{C}$ are $e_1, e_2, n_1, n_2$, respectively.
The conditions $3|L$ and $4|L$ imply that $e_1\leq 1$, $e_2\leq 1$ and $n_1=n_2=0$.
We aim to show that $L\geq 8M+4$.

Observe that 
\[
d^+(\set{I}_{E_1}) = \Big\{\frac{L}{4},\frac{L}{4}+1,\frac{L}{2},\frac{L}{2}+1,\frac{3L}{4},\frac{3L}{4}+1\Big\}.
\]
Since $L/4$ is odd, $d^+(\set{I}_{E_1})$ has four rough gaps.
Among other possible codewords in $\set{C}$, only non-equi-difference codewords may have rough tubes.
Moreover, if a codeword $\set{I}$ has rough tubes, we have $|d^+(\set{I})|\geq 10$ by Proposition~\ref{prop:tube} and the assumption that $L\equiv 12$ (mod $24$).

Assume that there are $t$ non-equi-difference codewords in $\set{I}$.
If $t \geq 1$, then by \eqref{eq:L_lowerbound} we have
\begin{align*}
L &\geq 2 + \sum_{\set{I}\in\set{C}} |d^+(\set{I})| \\
  &\geq 2 + 6e_1 + 4e_2 + 10t + 8\big( M-(e_1+e_2+t)\big) \\
  &= 2 + 8M - 2e_1 - 4e_2 + 2t \\
  &\geq 2 + 8M - 2 -4 +2 = 8M-2.
\end{align*}
If $t=0$, otherwise, then the four rough gaps in $d^+(\set{I}_{E_1})$ are all solitary.
By plugging $\lambda=4$ into Lemma~\ref{lemma:gap}, we have
\begin{align*}
L &\geq 2 + 4 + \sum_{\set{I}\in\set{C}} |d^+(\set{I})| \\
  &\geq 2 + 4 + 6e_1 + 4e_2 + 8\Big( M-(e_1+e_2)\Big) \\
  &= 6 + 8M - 2e_1 - 4e_2 \\
  &\geq 6 + 8M - 2 -4 = 8M.
\end{align*}
In either case, we obtain $L\geq 8M+4$ due to $L\equiv 12$ (mod $24$).
\qed
\end{proof}

We end this section by collecting the results in Lemma~\ref{lemma:upperbound_1} and \ref{lemma:upperbound_2}.

\begin{theorem} \label{thm:upperbound}
Let $L\geq 18$. Then
\[
M_S(L,3) \leq 
\begin{cases}
  L/8 & \text{if } L\equiv 0 \text{ (mod $8$)}, \\
  (L-4)/8 & \text{if } L\equiv 4 \text{ (mod $8$)}, \\
  (L+2)/8 & \text{if } L\equiv 6 \text{ (mod $24$)}, \\
  (L-2)/8 & \text{if } L\equiv 2,10,18 \text{ (mod $24$)}, \\
  (L-6)/8 & \text{if } L\equiv 14,22 \text{ (mod $24$)}. \\
\end{cases}
\]
\end{theorem}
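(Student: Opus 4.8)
The plan is to derive the five-case bound by simply collating Lemma~\ref{lemma:upperbound_1} and Lemma~\ref{lemma:upperbound_2}, after refining the modular classification so that the floor functions appearing in Lemma~\ref{lemma:upperbound_1} can be evaluated exactly. The obstruction to reading off the answer directly is that the bounds in Lemma~\ref{lemma:upperbound_1} are indexed by $L\bmod 12$, whereas the value of $\lfloor(L+c)/8\rfloor$ is determined by $L\bmod 8$. Since $\mathrm{lcm}(8,12)=24$, the natural move is to work modulo $24$: a choice of $L\bmod 24$ pins down both the applicable branch of Lemma~\ref{lemma:upperbound_1} and the fractional part discarded by the floor.

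First I would enumerate the twelve even residues modulo $24$ and, for each, read off $L\bmod 12$ to select the correct bound from Lemma~\ref{lemma:upperbound_1}, then use $L\bmod 8$ to evaluate the floor. For example, the residues $0,8,16\pmod{24}$ all satisfy $L\equiv 0\pmod 8$; they fall into the $L\equiv 0,8,4\pmod{12}$ branches with discarded fractions $\tfrac12,\tfrac14,\tfrac14$ respectively, so each collapses to exactly $L/8$, matching the first case. Analogously, $6\pmod{24}$ gives the integer $(L+2)/8$; the residues $2,10,18\pmod{24}$ each evaluate to $(L-2)/8$; and $14,22\pmod{24}$ each yield $(L-6)/8$. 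These are routine floor computations, but they must be tabulated one residue at a time since each discards a different remainder.

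The single residue for which Lemma~\ref{lemma:upperbound_1} is too weak is $L\equiv 12\pmod{24}$. Here $L\equiv 0\pmod{12}$ gives only $\lfloor(L+4)/8\rfloor=(L+4)/8=(L-4)/8+1$, one more than claimed. This is exactly the case Lemma~\ref{lemma:upperbound_2} sharpens, and invoking it yields $(L-4)/8$, which agrees with the $L\equiv 4\pmod 8$ branch that $12\pmod{24}$ belongs to (together with $4$ and $20\pmod{24}$, both of which are handled by Lemma~\ref{lemma:upperbound_1} alone). I would therefore single out this residue and cite Lemma~\ref{lemma:upperbound_2} instead of Lemma~\ref{lemma:upperbound_1}.

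The main difficulty is bookkeeping rather than mathematics: one must confirm that the twelve even residues modulo $24$ partition precisely into the five stated congruence classes and that every residue's floor evaluation matches the asserted value. The only genuine content beyond arithmetic is the lone appeal to Lemma~\ref{lemma:upperbound_2}; elsewhere the argument amounts to verifying that the remainder dropped by $\lfloor(L+c)/8\rfloor$ lands where the theorem predicts.
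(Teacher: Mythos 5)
Your proposal is correct and matches the paper exactly: the paper proves Theorem~\ref{thm:upperbound} precisely by collecting Lemma~\ref{lemma:upperbound_1} and Lemma~\ref{lemma:upperbound_2}, with the residue-by-residue evaluation modulo $24$ (including the special appeal to Lemma~\ref{lemma:upperbound_2} for $L\equiv 12 \pmod{24}$) left implicit. Your tabulation of the floor values and the partition of the twelve even residues into the five stated classes is accurate, so your write-up is essentially the paper's argument made explicit.
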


\section{Optimal SCACs} \label{sec:optimal}
In this section we will show that the upper bounds of $M_S(L,3)$ obtained in Theorem~\ref{thm:upperbound} are indeed tight in several cases.
To construct SCACs attaining these upper bounds, we revisit a construction of SCACs from existing CACs proposed in \cite{SCAC}.

Let $\set{C}=\{\set{I}_1,\set{I}_2,\ldots,\set{I}_M\}$ be a CAC of length $L$ and weight $\omega$.
For $j=1,2,\ldots,M$ define $2\set{I}_j=\{2t:\,t\in\set{I}_j\}$. 
By viewing each $2\set{I}_j$ as an $\omega$-subset of $\mathbb{Z}_{2L}$, it is obvious that $\{1,2L-1\}\cap d^*(2\set{I}_j)=\emptyset$ and $d^+(2\set{I}_j)\cap d^+(2\set{I}_k) = \emptyset$ for all $j \neq k$. Thus, $\{2\set{I}_1,2\set{I}_2,\ldots,2\set{I}_M\}$ forms an SCAC of length $2L$ and weight $\omega$ by Proposition~\ref{prop:property}.
Note that the strategy of doubling all elements in $\set{I}_j$ is equivalent to that of padding an extra zero after each entry when considering $\set{I}_j$ as a binary sequence.


\begin{theorem}[\cite{SCAC}]
If there exists a CAC of $M$ codewords in $\CAC(L,\omega)$, then there exists an SCAC of $M$ codewords in $\SCAC(2L,\omega)$.
\label{thm:construction}
\end{theorem}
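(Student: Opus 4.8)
The plan is to verify directly that the explicit doubling construction described just before the statement produces a code meeting the two conditions of Proposition~\ref{prop:property}, which is exactly the criterion for membership in $\SCAC(2L,\omega)$. The entire argument hinges on one parity observation: doubling sends every element of $\set{I}_j\subseteq\mathbb{Z}_L$ to an even residue of $\mathbb{Z}_{2L}$, so every nonzero difference arising from $2\set{I}_j$ is even. I would organize the verification around this remark.

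First I would record how the difference sets transform under doubling. If $a,b\in\set{I}_j$ with $a-b\equiv c\pmod L$, then $2a-2b\equiv 2c\pmod{2L}$, so $d^*(2\set{I}_j)=2\,d^*(\set{I}_j)$, where $c\mapsto 2c$ is a bijection from the nonzero residues $\{1,\ldots,L-1\}$ onto the even nonzero residues $\{2,4,\ldots,2L-2\}$ of $\mathbb{Z}_{2L}$. In particular each $2\set{I}_j$ still has weight $\omega$ and the doubled family still consists of $M$ codewords. Condition (i) of Proposition~\ref{prop:property} is then immediate: since $d^*(2\set{I}_j)$ contains only even residues while $1$ and $2L-1$ are odd, we have $\{1,2L-1\}\cap d^*(2\set{I}_j)=\emptyset$.

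For condition (ii) I would split each shifted difference set into an even layer and an odd layer. By definition $d^+(2\set{I}_j)=d^*(2\set{I}_j)+\{0,1\}=\bigl(2\,d^*(\set{I}_j)\bigr)\uplus\bigl(2\,d^*(\set{I}_j)+1\bigr)$, the first summand even and the second odd. For $j\neq k$, the even layer of one codeword cannot meet the odd layer of the other purely by parity, so it suffices to compare layers of equal parity. The even layers satisfy $2\,d^*(\set{I}_j)\cap 2\,d^*(\set{I}_k)=\emptyset$ precisely when $d^*(\set{I}_j)\cap d^*(\set{I}_k)=\emptyset$, because $c\mapsto 2c$ is injective; and the latter is exactly the defining CAC condition \eqref{eq:CAC}. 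The odd layers are obtained from the even layers by the common shift $+1$, so their disjointness is equivalent to that of the even layers. Hence $d^+(2\set{I}_j)\cap d^+(2\set{I}_k)=\emptyset$ for all $j\neq k$, and Proposition~\ref{prop:property} delivers the desired SCAC.

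There is no genuine obstacle here; the construction is engineered so that both SCAC conditions collapse onto the single CAC condition. The only point requiring a moment's care is the injectivity of the doubling map from $\mathbb{Z}_L$-differences into the even residues of $\mathbb{Z}_{2L}$, which is what lets disjointness be transported between the two lengths in both directions. This is immediate from the equivalence $2c\equiv 2c'\pmod{2L}\Leftrightarrow c\equiv c'\pmod L$, so the proof is entirely routine once the even/odd layer decomposition is in place.
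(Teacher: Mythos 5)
Your proof is correct and takes essentially the same route as the paper: the paper proves this theorem via the identical doubling construction $\set{I}_j \mapsto 2\set{I}_j \subseteq \mathbb{Z}_{2L}$ and checks membership in $\SCAC(2L,\omega)$ against Proposition~\ref{prop:property}, declaring both conditions ``obvious.'' Your even/odd layer decomposition of $d^+(2\set{I}_j)$ and the injectivity of $c\mapsto 2c$ simply supply the details the paper leaves unstated.
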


By Theorem~\ref{thm:construction}, it is easy to see that $M_S(L,\omega)\geq M(L/2,\omega)$ whenever $L$ is even.
Therefore, we can obtain several optimal SCACs by Theorem~\ref{thm:upperbound}, Theorem~\ref{thm:construction} and some known optimal CACs listed in Section~\ref{sec:pre_known}.

\begin{corollary}\label{cor:SCAC_even1}
Let $L\geq 18$. Then,
\begin{enumerate}
\item $M_S(L,3)=(L-4)/8$ if $L\equiv 4$ (mod $8$);
\item $M_S(L,3)=(L-2)/8$ if $L=2^{2t+1}+2$, $2^{2t}-2^{t+1}+2$ or $2^{2t}+2^{t+1}+2$ for some $t$;
\item $M_S(L,3)=(L+2)/8$ if $L=2^{2^t+1}-2$ for some $t$.
\end{enumerate}
\end{corollary}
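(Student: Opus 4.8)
The plan is to prove each equality by sandwiching $M_S(L,3)$ between a matching upper and lower bound. For the upper bound I will invoke Theorem~\ref{thm:upperbound} after determining the residue class of $L$ modulo $8$ or $24$; for the lower bound I will use the inequality $M_S(L,3)\geq M(L/2,3)$ noted immediately after Theorem~\ref{thm:construction}, and then substitute the known value of $M(L/2,3)$ from the optimal CAC results collected in Section~\ref{sec:pre_known}. The routine arithmetic identities $(L/2-2)/4=(L-4)/8$, $(L/2-1)/4=(L-2)/8$, and $(L/2+1)/4=(L+2)/8$ are what translate each CAC value into the desired SCAC value.

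For part (i), since $L\equiv 4\pmod 8$ Theorem~\ref{thm:upperbound} gives $M_S(L,3)\leq (L-4)/8$. For the lower bound, observe that $L/2\equiv 2\pmod 4$, so Theorem~\ref{thm:LT05} yields $M(L/2,3)=(L/2-2)/4=(L-4)/8$; the two bounds coincide. Part (iii) is handled the same way: here $L/2=2^{2^t}-1$, so Theorem~\ref{thm:knowCAC}(ii) gives $M(L/2,3)=(L/2+1)/4=(L+2)/8$ as a lower bound, while the exponent $2^t+1$ is odd, forcing $L\equiv 6\pmod{24}$, whence Theorem~\ref{thm:upperbound} supplies the matching upper bound $(L+2)/8$.

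For part (ii) I would treat the three families of $L$ uniformly. In each case $L/2$ is precisely one of the lengths covered by Theorem~\ref{thm:knowCAC}(i) (namely $L/2=2^{2t}+1$) or by the two items of Theorem~\ref{thm:knowCAC2} (namely $L/2=2^{2t-1}\mp 2^{t}+1$), so in all three cases $M(L/2,3)=(L/2-1)/4=(L-2)/8$, which is the desired lower bound. For the matching upper bound I must verify that each such $L$ satisfies $L\equiv 2$ or $10\pmod{24}$. This reduces to computing powers of $2$ modulo $24$, which stabilize at $8$ for odd exponents and at $16$ for even exponents at least $4$; carrying out this bookkeeping shows that every member of the three families with $L\geq 18$ lands in $\{2,10\}\pmod{24}$, so Theorem~\ref{thm:upperbound} gives $M_S(L,3)\leq (L-2)/8$ and the sandwich closes.

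The arithmetic linking $(L/2\mp 1)/4$ to $(L\mp 2)/8$ is immediate, and the lower bounds are a direct citation of the CAC spectrum. The one step that demands genuine care is the modular bookkeeping that pins down the residue of each exponential family modulo $24$, since selecting the wrong branch of Theorem~\ref{thm:upperbound} would yield an upper bound that fails to match the lower bound; I therefore expect the case analysis on the parity of the exponents $t$ and $t+1$ in part (ii) to be the main obstacle.
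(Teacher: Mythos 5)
Your proposal is correct and is exactly the paper's argument: the paper derives this corollary by combining the upper bounds of Theorem~\ref{thm:upperbound} with the lower bound $M_S(L,3)\geq M(L/2,3)$ from Theorem~\ref{thm:construction} and the optimal CAC values in Theorems~\ref{thm:LT05}, \ref{thm:knowCAC} and \ref{thm:knowCAC2}. Your modular bookkeeping (parts (ii) and (iii) landing in the residue classes $2,10$ and $6$ modulo $24$, respectively) checks out, and in fact supplies details the paper leaves implicit.
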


A bound of $M_S(L,3)$ for each $L\equiv 0$ (mod 8) is also obtained.
\begin{corollary}\label{cor:SCAC_even2}
Let $L=8t$ for some $t\geq 3$. Then,
\[
\frac{L}{8} \geq M_S(L,3) \geq 
\begin{cases}
    7L/64 & \text{if } t\equiv 0 \text{ (mod $8$)},\\
    (7L+8)/64 & \text{if } t\equiv 1 \text{ (mod $8$)},\\
    (7L-48)/64 & \text{if } t\equiv 2,10 \text{ (mod $24$)},\\
    (7L+24)/64 & \text{if } t\equiv 3 \text{ (mod $24$)},\\
    (7L-32)/64 & \text{if } t\equiv 4,20 \text{ (mod $24$)},\\
    (7L-24)/64 & \text{if } t\equiv 5,13 \text{ (mod $24$)},\\
    (7L-16)/64 & \text{if } t\equiv 6 \text{ (mod $8$)},\\
    (7L-8)/64 & \text{if } t\equiv 7 \text{ (mod $8$)},\\
    (7L-40)/64 & \text{if } t\equiv 11,19 \text{ (mod $24$)},\\
    (7L+32)/64 & \text{if } t\equiv 12 \text{ (mod $24$)},\\
    (7L+16)/64 & \text{if } t\equiv 18 \text{ (mod $24$)},\\
    (7L+40)/64 & \text{if } t\equiv 21 \text{ (mod $24$)}.\\
\end{cases}
\]
\end{corollary}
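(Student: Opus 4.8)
The plan is to prove the two inequalities independently; each is obtained by invoking a result already established, so the corollary amounts to assembling Theorem~\ref{thm:upperbound}, Theorem~\ref{thm:construction} and Theorem~\ref{thm:CAC_4t} and performing a routine change of variable.

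First I would dispose of the upper bound. Writing $L=8t$ with $t\geq 3$ forces $L\equiv 0\pmod 8$ and $L\geq 24\geq 18$, so Theorem~\ref{thm:upperbound} applies and its first case gives $M_S(L,3)\leq L/8$ at once. This settles the left-hand inequality with no further argument.

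For the lower bound I would exploit the doubling construction. Since $L=8t$ is even, the consequence of Theorem~\ref{thm:construction} recorded in the surrounding text, namely $M_S(L,\omega)\geq M(L/2,\omega)$, gives $M_S(L,3)\geq M(L/2,3)=M(4t,3)$; and $M(4t,3)$ is known exactly from Theorem~\ref{thm:CAC_4t}, its value depending only on the residue of $t$ modulo $8$ or $24$. Because the residue classes listed in the corollary coincide with those of Theorem~\ref{thm:CAC_4t}, the argument proceeds case by case: in each class one reads off $M(4t,3)$ from Theorem~\ref{thm:CAC_4t} with its length parameter set to $4t=L/2$, then re-expresses the value in the SCAC-length $L=8t$ to recover the displayed bound. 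Existence of an optimal $\CAC(4t,3)$ of this size is itself the content of Theorem~\ref{thm:CAC_4t}, and the doubling map of Theorem~\ref{thm:construction} turns it into an SCAC of length $L$ with the same number of codewords, so each lower bound is attained by an explicit code. I expect no genuine obstacle here; the only point demanding care is the purely clerical translation of every case of Theorem~\ref{thm:CAC_4t}, stated for length $4t$, into the matching expression in $L=8t$, together with the check that $t\geq 3$ keeps each invoked residue class inside the range where Theorem~\ref{thm:CAC_4t} is valid.
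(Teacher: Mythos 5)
Your overall strategy coincides with the paper's implicit proof: the upper bound is the case $L\equiv 0\pmod 8$ of Theorem~\ref{thm:upperbound}, and the lower bound comes from $M_S(L,3)\geq M(L/2,3)$ (Theorem~\ref{thm:construction}) combined with the values of $M(4t,3)$ in Theorem~\ref{thm:CAC_4t}. The paper offers no further argument for this corollary, so in spirit you have reconstructed its proof.

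The difficulty is that the step you dismiss as ``purely clerical'' fails if you carry it out against Theorem~\ref{thm:CAC_4t} as printed. Take $t\equiv 0\pmod 8$: the theorem, with its length parameter set to $4t=L/2$, yields $M(L/2,3)=7(4t)/64=7t/16$, which re-expressed in $L=8t$ is $7L/128$ --- only half of the claimed bound $7L/64=7t/8$. The same factor of two appears in every residue class: the corollary's formulas are the theorem's formulas with the symbol $L$ replaced literally by $8t$, i.e.\ evaluated at \emph{twice} the CAC length, not at $L/2$ as your substitution dictates. So either the corollary overstates what the doubling construction gives, or Theorem~\ref{thm:CAC_4t} is misprinted. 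It is the latter: for $t=4$ the printed theorem asserts $M(16,3)=(7\cdot16-32)/64=1.25$, which is not even an integer, whereas in fact $M(16,3)=3$; the correct formulas from the cited references have denominator $32$ (here $M(16,3)=(7\cdot16-16)/32=3$). With those corrected values, $M(4t,3)=(7\cdot 4t\pm c)/32=(7L\pm 2c)/64$ does reproduce exactly the displayed case list, and your argument then closes. As written, however, your proposal asserts a numerical identity that is false relative to the paper's own statement of Theorem~\ref{thm:CAC_4t}; working through even one residue class explicitly would have exposed the factor-of-two discrepancy, and repairing it requires going back to the original CAC literature rather than a routine change of variable.
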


\smallskip
In what follows we consider CACs of odd length and weight three.
Let $\mathcal{C}\in\CAC(L,3)$ with odd $L$ and $\mathcal{I}$ be one of its codewords.
Since $L$ is odd, we have
\begin{equation}
|d^*(\mathcal{I})|=
\begin{cases}
2 & \text{if } \mathcal{I} \text{ is equi-difference with generator } g=\frac{L}{3}, \\
4 & \text{if } \mathcal{I} \text{ is equi-difference with generator } g\neq \frac{L}{3} ,\\
6 & \text{otherwise.}\\
\end{cases}
\label{eq:12}
\end{equation}

We say a code $\C\in\CAC(L,3)$ has \emph{leave} $\Lambda$ if
$$\mathbb{Z}_L \backslash \bigcup_{\mathcal{I}\in\mathcal{C}}d(\mathcal{I})=\Lambda.$$
If $\Lambda$ is empty, then the code $\mathcal{C}$ is said to be \emph{tight}.
By~\eqref{eq:12}, we have the following.

\begin{proposition}\label{pro:optimal_with_leave}
Let $\C$ be a code in $\CAC^e(L,3)$ having leave $\Lambda$, where $L\geq 3$ is an odd integer.
If $|\Lambda|<4$ and $\{\frac{L}{3},\frac{2L}{3}\}\not\subset\Lambda$, then $\mathcal{C}$ is optimal.
Moreover, $$|\mathcal{C}|=M^e(L,3)=M(L,3).$$
\end{proposition}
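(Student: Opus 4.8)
The plan is to count the total size of the difference sets and compare it against the maximum number of distinct nonzero differences available in $\mathbb{Z}_L$. Since $L$ is odd, every nonzero difference $a-b$ comes paired with its negative $b-a = L-(a-b)$, and these are always distinct because $L$ is odd rules out $a-b \equiv L/2$. By the defining property \eqref{eq:CAC} of a CAC, the sets $d^*(\set{I})$ for distinct codewords are pairwise disjoint, so $\sum_{\set{I}\in\set{C}}|d^*(\set{I})| = |\bigcup_{\set{I}\in\set{C}} d^*(\set{I})| = (L-1)-|\Lambda|$, using that $\mathbb{Z}_L\setminus\bigcup d(\set{I}) = \Lambda$ and that $0$ lies in every $d(\set{I})$.

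First I would split the codewords of $\set{C}$ by type according to \eqref{eq:12}. Write $a$ for the number of equi-difference codewords with generator $g=L/3$ (each contributing $|d^*|=2$), $b$ for the number of equi-difference codewords with $g\ne L/3$ (each contributing $4$), and recall that by hypothesis $\set{C}\in\CAC^e(L,3)$ so there are no non-equi-difference codewords. Since $3\nmid$ need not hold, the generator $g=L/3$ exists only when $3\mid L$, and in that case at most one such codeword can appear because its two differences $\{L/3,2L/3\}$ are forced. Thus $a\le 1$ and $|\set{C}| = a+b$. The difference count gives
\begin{equation}\label{eq:mycount}
2a + 4b = (L-1) - |\Lambda|,
\end{equation}
so that $|\set{C}| = a+b = \frac{(L-1)-|\Lambda| - 2a}{4} + a = \frac{(L-1)-|\Lambda|+2a}{4}$.

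Next I would argue that the hypotheses $|\Lambda|<4$ and $\{L/3,2L/3\}\not\subset\Lambda$ pin down this expression to the largest value it can take for any code, equi-difference or not. The key point is an upper bound on $M(L,3)$ coming from the same counting: any codeword in $\CAC(L,3)$ has $|d^*(\set{I})|\ge 2$, and a codeword achieving $|d^*|=2$ must be the equi-difference codeword with $g=L/3$, which can occur at most once. Hence for an \emph{arbitrary} code $\set{C}'\in\CAC(L,3)$, at most one codeword contributes $2$ and all others contribute at least $4$, giving $\sum|d^*|\ge 4(|\set{C}'|-1)+2$ whenever the $g=L/3$ codeword is present and $\ge 4|\set{C}'|$ otherwise; since $\sum|d^*|\le L-1$, this yields a universal upper bound on $M(L,3)$. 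I would then check that the constraints on $\Lambda$ force the constructed code to meet this bound: the condition $\{L/3,2L/3\}\not\subset\Lambda$ ensures that if $3\mid L$ the pair $\{L/3,2L/3\}$ is covered, so the optimal code is free to use the cheap $g=L/3$ codeword (i.e. $a$ is as large as possible given $\Lambda$), while $|\Lambda|<4$ guarantees the leftover differences cannot accommodate an additional codeword of weight three.

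The main obstacle will be the bookkeeping that shows no competing non-equi-difference code can beat $|\set{C}|$: one must verify that replacing equi-difference codewords by non-equi-difference ones (which have $|d^*|=6$) never increases the count, and that the residual $|\Lambda|<3$ (after accounting for the parity constraint that $|\Lambda|$ has the same parity as $L-1$ modulo the contributions, i.e. $|\Lambda|\equiv (L-1)-2a \pmod 4$) genuinely cannot support one more codeword. Concretely, if there were an optimal code of size $|\set{C}|+1$, its differences would occupy at least $(L-1)-|\Lambda| + 4$ of the available $L-1$ nonzero slots once we subtract the covered part, contradicting $|\Lambda|<4$ together with the at-most-one-cheap-codeword restriction. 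Carefully matching these inequalities, and treating the two cases $3\mid L$ and $3\nmid L$ separately, establishes $|\set{C}| = M^e(L,3) = M(L,3)$ as claimed.
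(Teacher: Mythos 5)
Your plan is the same counting argument that the paper's own one-line derivation from \eqref{eq:12} intends, and most of it is sound: disjointness of the sets $d^*(\set{I})$ gives $\sum_{\set{I}\in\C}|d^*(\set{I})|=(L-1)-|\Lambda|$; the classification \eqref{eq:12} turns this into $2a+4b=(L-1)-|\Lambda|$ with $a\le 1$; and for an arbitrary competing code $\C'\in\CAC(L,3)$ the same classification gives $L-1\ge\sum|d^*(\set{I})|\ge 4|\C'|-2$, where the saving of $2$ is available only when $3\mid L$. Combining these, $|\C'|-|\C|\le(|\Lambda|+2-2a)/4$ when $3\mid L$ and $|\C'|-|\C|\le|\Lambda|/4$ when $3\nmid L$; since code sizes are integers and $|\Lambda|<4$, optimality follows at once, \emph{provided} that $a=1$ whenever $3\mid L$. (In particular, the parity bookkeeping in your last paragraph is unnecessary: integrality of $|\C'|$ and $|\C|$ alone closes the argument.)

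The one genuine gap is the step that secures $a=1$. You write that, since $\{L/3,2L/3\}$ is covered, ``the optimal code is free to use the cheap $g=L/3$ codeword''; besides conflating $\C$ with the hypothetical competing code, this does not prove what is actually needed: if $3\mid L$ and $L/3\notin\Lambda$, then the codeword $\{0,L/3,2L/3\}$ itself must belong to $\C$. For that you must rule out that some \emph{other} equi-difference codeword of $\C$ has $L/3$ among its differences. This is true, but only because $L$ is odd: $d^*(\{0,g,2g\})=\{\pm g,\pm 2g\}$, and since $2$ is invertible modulo odd $L$, the congruence $2g\equiv\pm L/3 \pmod{L}$ forces $g\equiv\mp L/3 \pmod{L}$, so any codeword covering the difference $L/3$ coincides with $\{0,L/3,2L/3\}$. (For even $L$ this step fails outright—the codeword with generator $L/6$ covers $L/3$—so it cannot be waved through.) Without this observation, the case $3\mid L$, $a=0$, $\Lambda\ne\{L/3,2L/3\}$ is not excluded by your argument, and there $|\C|=((L-1)-|\Lambda|)/4$ falls one short of the bound $(L+1)/4$ that a competing code is allowed to reach, so the conclusion would not follow. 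Once this small lemma is inserted, your proof is complete and agrees with the paper's intended argument.
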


Let $L\geq3$ be an odd integer and $G(L)$ be a graph with vertex set $V(G)=\{1,2,\ldots, \frac{L-1}2\}$ and edge set $E(G)$, defined by $(a,b)\in E(G)$ if $b\equiv \pm2a$ (mod $L$).
Then the graph $G(L)$ is a union of disjoint cycles.
Note that a loop is considered as a cycle of length $1$, and a pair of multiedges is considered as a cycle of length $2$.
$G(L)$ is useful in finding the number $M^e(L,3)$.
More precisely, an edge $(a,b)$ in $G(L)$ represents the equi-difference codeword $\{0,a,2a\}$ in a code of length $L$, then the number $M^e(L,3)$ is determined by the size of \emph{maximum matching} in $G(L)$.
Let $N_{odd}(L)$ be the number of odd cycles in $G(L)$.
The following equation was given in \cite{Fu_Lo_Shum_2012}.
\begin{equation}
M^e(L,3) = \frac{(L-1)/2-N_{odd}(L)}{2} + \chi(3|L),
\end{equation}
where $\chi(A) = 1$ or 0 depends on the statement $A$ is true or false.

For an odd integer $n>2$ let $e_n$ be the smallest exponent $e\geq 1$ such that $2^e\equiv 1$ (mod $n$), and let $c_n$ be the smallest exponent $c\geq 1$ such that $2^c\equiv\pm 1$ (mod $n$).
The exponent $e_n$ and $c_n$ are called the \emph{multiplicative order} and the \emph{multiplicative suborder} of $2$ modulo $n$, respectively.

For any odd prime $p$, Fu et al.~\cite{Fu_Lo_Shum_2012} characterize the number $N_{odd}(p)$ in terms of $e_p$ and derive a necessary and sufficient condition for a tight CAC of weight three.

\begin{theorem}[\cite{Fu_Lo_Shum_2012}]\label{thm:O_prime}
Let $p$ be an odd prime. Then,
\[
N_{odd}(p)=
\begin{cases}
\frac{p-1}{2e_p} & \text{ if } p\equiv 7 \text{ (mod 8), or } p\equiv 1 \text{ (mod 8) and } e_p \text{ is odd}, \\
\frac{p-1}{e_p} & \text{ if } p\equiv 3 \text{ (mod 8), or } p\equiv 1 \text{ (mod 8) and } 4|(e_p-2), \\
0 & \text{ if } p\equiv 5 \text{ (mod 8), or } p\equiv 1 \text{ (mod 8) and } 4|e_p.
\end{cases}
\]
\end{theorem}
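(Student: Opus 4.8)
The plan is to reinterpret the graph $G(p)$ group-theoretically. Since $p$ is odd, each nonzero residue is congruent to exactly one of $\pm a$ with $1\le a\le (p-1)/2$, so the vertex set $V(G)=\{1,\ldots,(p-1)/2\}$ is a complete set of representatives for the quotient group $H:=\mathbb{Z}_p^{*}/\{\pm1\}$. Under the defining relation $b\equiv\pm2a$, each vertex $a$ is joined to the representative of $2a$ and to the representative of $2^{-1}a$ in $H$; thus $G(p)$ is precisely the undirected functional graph of the permutation $\mu:x\mapsto 2x$ acting on $H$, which is why $G(p)$ is a disjoint union of cycles. The first step is to make this identification precise, taking care of the degenerate cases (a loop when a cycle has length $1$, i.e.\ $p=3$, and a double edge when the length is $2$, i.e.\ $p=5$).

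The second step is to observe that all cycles share a common length. The cycle through $a$ closes after $k$ steps exactly when $2^{k}a\equiv\pm a\pmod p$, i.e.\ when $2^{k}\equiv\pm1\pmod p$; since this condition is independent of $a$, every orbit of $\mu$ has length equal to the multiplicative suborder $c_p$. Consequently $G(p)$ consists of exactly $\frac{(p-1)/2}{c_p}$ cycles, all of length $c_p$ and hence all of the same parity. It follows at once that $N_{odd}(p)=\frac{p-1}{2c_p}$ when $c_p$ is odd and $N_{odd}(p)=0$ when $c_p$ is even.

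Next I would convert $c_p$ into $e_p$. Because $\mathbb{Z}_p^{*}$ is cyclic, its unique element of order $2$ is $-1$, and the subgroup $\langle2\rangle$ of order $e_p$ contains it precisely when $2\mid e_p$; thus $-1\in\langle2\rangle$ iff $e_p$ is even. Hence $c_p=e_p$ when $e_p$ is odd, while $c_p=e_p/2$ when $e_p$ is even, since then $2^{e_p/2}$ is the unique element of order $2$ in $\langle2\rangle$, namely $-1$. This splits the count into three cases by $e_p\bmod4$: $N_{odd}(p)=\frac{p-1}{2e_p}$ if $e_p$ is odd, $N_{odd}(p)=\frac{p-1}{e_p}$ if $e_p\equiv2\pmod4$, and $N_{odd}(p)=0$ if $4\mid e_p$. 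This already settles the three $p\equiv1\pmod8$ subcases verbatim.

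The final and most delicate step is to pin down the $2$-adic valuation of $e_p$ from $p\bmod8$ for the remaining classes. Writing $p-1=2^{s}m$ with $m$ odd, the congruence $2^{(p-1)/2}\equiv1\pmod p$ holds iff $e_p\mid(p-1)/2$, i.e.\ iff $v_2(e_p)<s$; thus $2$ is a quadratic residue iff $v_2(e_p)<s$ and a nonresidue iff $v_2(e_p)=s$. Combining this with the supplementary law that $2$ is a quadratic residue exactly when $p\equiv\pm1\pmod8$ yields: for $p\equiv7\pmod8$ we have $s=1$ with $2$ a residue, so $v_2(e_p)=0$ and $e_p$ is odd (first case); for $p\equiv3\pmod8$ we have $s=1$ with $2$ a nonresidue, so $v_2(e_p)=1$, i.e.\ $e_p\equiv2\pmod4$ (second case); and for $p\equiv5\pmod8$ we have $s=2$ with $2$ a nonresidue, so $v_2(e_p)=2$ and $4\mid e_p$ (third case). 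I expect the main obstacle to lie exactly here—cleanly extracting $v_2(e_p)$ from the residue symbol and the valuation $s$ of $p-1$, and matching all four residue classes—whereas the cycle-structure argument of the earlier steps is essentially bookkeeping.
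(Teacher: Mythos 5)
Your proof is correct. The paper itself offers no proof of this theorem---it is quoted from \cite{Fu_Lo_Shum_2012}---and your argument is essentially the standard one from that reference and is implicit in the framework the paper summarizes: for prime $p$ every cycle of $G(p)$ has length $c_p$ (the case $d=1$ of Lemma~\ref{lem:cyclelength}(2)), so $N_{odd}(p)$ equals $\frac{p-1}{2c_p}$ or $0$ according to the parity of $c_p$, and your translation between $c_p$ and $e_p$ plus the supplementary law for $2$ (determining $v_2(e_p)$ from $p \bmod 8$) gives exactly the three stated cases.
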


\begin{theorem}[\cite{Fu_Lo_Shum_2012}]\label{thm:tight_codition}
Let $L=\prod_{i=1}^{m}p_i^{r_i}$ be an odd integer, where $p_1<p_2<\ldots<p_m$ are distinct prime factors and each $r_i\in\mathbb{N}$.
There exists a tight equi-difference code $\set{C}\in\CAC(L,3)$ if and only if one of the following holds:
\begin{enumerate}[(a)]
\item $p_1>3$ and each $p_i$ satisfies the third condition in Theorem \ref{thm:O_prime}; or
\item $p_1=3,r_1=1$, and for $i\geq 2$, $p_i$ satisfies the third condition in Theorem~\ref{thm:O_prime}.
\end{enumerate}
\end{theorem}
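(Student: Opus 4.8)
The plan is to recast tightness as a perfect–matching condition on $G(L)$ and then reduce the divisor–by–divisor analysis to the prime case settled by Theorem~\ref{thm:O_prime}. An equi-difference codeword $\{0,g,2g\}$ with $g\neq L/3$ has nonzero difference set $\{\pm g,\pm 2g\}$ and corresponds to the edge $(g,2g)$ of $G(L)$, covering the vertices $g$ and $2g$; the exceptional codeword $\{0,L/3,2L/3\}$ (available only when $3\mid L$) has difference set $\{\pm L/3\}$ and corresponds to the loop at $L/3$, covering that single vertex. The CAC condition is precisely vertex-disjointness, so an equi-difference code is a matching in $G(L)$ with the loop usable as a one-vertex cover, and tightness means every vertex is covered. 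Since $G(L)$ is a disjoint union of cycles, a maximum matching leaves one vertex uncovered in each odd cycle of length at least $3$, whereas the unique loop (at $L/3$) is covered by the exceptional codeword; hence the number of uncovered vertices is $N_{odd}(L)-\chi(3\mid L)$, and a tight equi-difference code exists iff
\[
N_{odd}(L)=\chi(3\mid L).
\]

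Next I would compute $N_{odd}(L)$ from the factorization by grouping vertices according to additive order. The elements $a$ with $\gcd(a,L)=L/d$ are exactly $a=(L/d)u$ for $u\in(\mathbb{Z}_d)^*$, and multiplication by $2$ preserves this layer, acting as $u\mapsto 2u$ on $(\mathbb{Z}_d)^*/\{\pm1\}$. Thus $G(L)$ is the disjoint union, over divisors $d\mid L$ with $d>1$, of the subgraph $H(d)$ on $(\mathbb{Z}_d)^*/\{\pm1\}$, so $N_{odd}(L)=\sum_{d\mid L,\,d>1}N_{odd}(H(d))$. Because $u$ is a unit, its orbit under $u\mapsto 2u$ closes after the least $\ell$ with $2^\ell\equiv\pm1\pmod d$, namely $\ell=c_d$; hence every cycle of $H(d)$ has length $c_d$, so $N_{odd}(H(d))>0$ exactly when $c_d$ is odd. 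As $c_d=1$ forces $d\in\{1,3\}$, the only loop is the one at $L/3$, and we conclude that $N_{odd}(L)=\chi(3\mid L)$ holds iff the sole divisor $d>1$ with $c_d$ odd is $d=3$.

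The decisive parity fact is that the cyclic group $\langle 2\rangle$ has a unique involution, so whenever $-1\in\langle 2\rangle$ modulo $d$ one has $2^{e_d/2}\equiv-1$ and therefore $c_d\in\{e_d,e_d/2\}$ for every $d$; in particular $4\mid e_d$ forces $c_d$ to be even. Since $p\mid d$ implies $e_p\mid e_d$, as soon as some prime factor $p$ of $d$ satisfies $4\mid e_p$ we obtain $4\mid e_d$ and hence $c_d$ even. For a prime $p$, cyclicity of $(\mathbb{Z}_p)^*$ together with Theorem~\ref{thm:O_prime} gives $c_p$ even $\Leftrightarrow 4\mid e_p\Leftrightarrow p$ meets the third condition. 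Both directions now follow: if (a) or (b) holds then every divisor $d>1$ with $d\neq3$ has a prime factor exceeding $3$ that meets the third condition (the hypothesis $r_1=1$ excludes $d=9,27,\ldots$), so $4\mid e_d$, $c_d$ is even, the lone odd contribution is the $d=3$ loop, and $N_{odd}(L)=\chi(3\mid L)$; conversely, if a tight code exists then $c_d$ is even for all $d>1$ with $d\neq3$, and specializing to $d=p$ for each prime $p>3$ yields the third condition while $d=9$ (for which $c_9=3$ is odd) forces $9\nmid L$, i.e.\ $3\parallel L$, giving (a) or (b). I expect the main obstacle to be the second step: establishing cleanly that each layer $H(d)$ is a union of equal-length $c_d$-cycles and that $N_{odd}$ is additive over divisors, since this is exactly where the non-cyclicity of $(\mathbb{Z}_d)^*$ for composite $d$ must be bypassed — which the identity $c_d\in\{e_d,e_d/2\}$ makes possible.
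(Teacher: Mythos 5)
Your proof is correct, and it takes essentially the same route as the source this theorem is quoted from and as the framework the paper builds around it: tightness recast as the matching condition $N_{odd}(L)=\chi(3\mid L)$ on $G(L)$, the divisor-layer decomposition of $G(L)$ into cycles of length $c_{d}$ (this is exactly Lemma~\ref{lem:cyclelength}(2)), and the parity comparison of $c_d$ with $e_d$ reduced to the prime case of Theorem~\ref{thm:O_prime} — the same strategy the paper itself deploys in its proof of Theorem~\ref{thm:1_leave_condition}. Note that the paper states Theorem~\ref{thm:tight_codition} without proof, citing \cite{Fu_Lo_Shum_2012}, so the comparison here is with that reference and with the paper's parallel argument; your execution, including the one-directional implication ``$4\mid e_d$ forces $c_d$ even'' for composite $d$ (where cyclicity is unavailable) versus the full equivalence at primes, is sound.
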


In $G(L)$, the \emph{standard cycle}, denoted as $\langle 2\rangle_L$, is the cycle which contains $1$.
Given a cycle $C=(s_1,s_2,\ldots,s_t)$ in $G(L)$ and an integer $a$.
The \emph{modulo product} of $C$ by $a$, denoted by $aC$, is the cycle $(a\cdot s_1,a\cdot s_2,\ldots,a\cdot s_t)$ (mod $L$) in $G(L)$ where each item takes symmetry with respect to $L/2$; and, the \emph{normal product} of $C$ by $a$, denoted by $a\times C$, is the cycle $(a\cdot s_1,a\cdot s_2,\ldots,a\cdot s_t)$ in $G(aL)$.
Two cycles are said to be \emph{congruent}, denoted as $\cong$, if they have the same length and one of them is a modulo or normal product of the other one.
It is easy to see that $C\cong a\times C$.
Besides, it is not difficult to see that every cycle in $G(L)$ can be written as $a\langle 2 \rangle_L$ for some integer $1\leq a<L$.
Some properties of $G(L)$ and $c_L$ are given.

\begin{lemma}[\cite{Fu_Lo_Shum_2012}]\label{lem:cyclelength}
Let $L$ be an odd integer.
\begin{enumerate}[(1)]
\item $c_L$ divides $\varphi(L)/2$.
\item Let $a\langle 2 \rangle_L$ be a cycle in $G(L)$ for some integer $a$.
If $gcd(a,L)=d$, then $a\langle 2 \rangle_L \cong \langle 2 \rangle_{\frac{L}{d}}$.
In particular, $|a\langle 2 \rangle_L| = |\langle 2 \rangle_{\frac{L}{d}}| = c_{\frac{L}{d}}$.
\end{enumerate}
\end{lemma}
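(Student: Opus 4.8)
The plan is to handle the two parts by recasting the suborder $c_L$ group-theoretically for (1) and by factoring out $\gcd(a,L)$ to reduce (2) to a coprime situation. For part (1), I would work inside the unit group $\mathbb{Z}_L^{*}$, which has order $\varphi(L)$, and consider the subgroup $H=\langle 2,-1\rangle$ generated by $2$ and $-1$. Since $L\ge 3$ is odd we have $-1\not\equiv 1\pmod L$, so $\{1,-1\}$ is a subgroup of $H$ of order $2$. In the quotient $H/\{1,-1\}$ the image of $2$ is a generator, and its order is by definition the least $c\ge 1$ with $2^{c}\in\{1,-1\}$, namely $c_L$. Hence $|H|=2c_L$, and Lagrange's theorem gives $2c_L=|H|\mid\varphi(L)$, which is exactly the assertion $c_L\mid\varphi(L)/2$ (note $\varphi(L)$ is even, so $\varphi(L)/2$ is an integer).

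For part (2), I would view the cycle $a\langle 2\rangle_L$ concretely as the orbit of the vertex $a$ under multiplication by $\pm 2$ modulo $L$, vertices being the nonzero residues identified up to sign. Set $d=\gcd(a,L)$ and write $a=da'$ with $\gcd(a',L/d)=1$. The length of the cycle is the least $m\ge 1$ with $2^{m}a\equiv\pm a\pmod L$; because $a=da'$ with $a'$ coprime to $L/d$, this congruence is equivalent to $2^{m}\equiv\pm 1\pmod{L/d}$, whose least solution is $c_{L/d}=|\langle 2\rangle_{L/d}|$. This settles the length statement. For the congruence itself I would produce $a\langle 2\rangle_L$ in two steps. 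First, multiplication by the unit $a'$ is a bijection of $\mathbb{Z}_{L/d}$ that commutes with doubling, so it carries $\langle 2\rangle_{L/d}$ (the orbit of $1$) onto the orbit of $a'$, which is the modulo product $a'\langle 2\rangle_{L/d}$; the two have equal length. Second, scaling by $d$ realizes $a\langle 2\rangle_L$ as the normal product $d\times\bigl(a'\langle 2\rangle_{L/d}\bigr)$ in $G\bigl(d\cdot(L/d)\bigr)=G(L)$, using $d\cdot\bigl(a'2^{k}\bmod L/d\bigr)\equiv a2^{k}\pmod L$. Since both a modulo product and a normal product of equal length yield congruent cycles, transitivity of $\cong$ gives $a\langle 2\rangle_L\cong\langle 2\rangle_{L/d}$.

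Part (1) is the routine one: it collapses as soon as the subgroup $H=\langle 2,-1\rangle$ is introduced. The main obstacle is the bookkeeping in part (2), namely checking that reducing each residue to its representative in $\{1,\dots,(L-1)/2\}$ (the symmetry with respect to $L/2$) is compatible both with scaling by $d$ and with reduction modulo $L/d$, so that the normal and modulo products coincide with the orbit cycle vertex-by-vertex rather than merely as multisets. I would dispatch this by observing that the sign-identification $x\sim -x$ is preserved by multiplication by any integer and by reduction modulo a divisor of $L$, so the orbit description and the product descriptions of $a\langle 2\rangle_L$ agree.
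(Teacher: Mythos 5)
Your proof is correct. Note that this paper contains no proof of the lemma at all---it is imported verbatim from \cite{Fu_Lo_Shum_2012}---and your argument is essentially the standard one from that reference: for (1), Lagrange's theorem applied to the image of $2$ in $\mathbb{Z}_L^{*}/\{\pm 1\}$ (equivalently, your subgroup $\langle 2,-1\rangle$ of order $2c_L$ inside $\mathbb{Z}_L^{*}$); for (2), factoring $a=da'$ with $a'$ a unit modulo $L/d$, so that the $\pm$-doubling orbit of $a$ in $\mathbb{Z}_L$ corresponds, via multiplication by $a'$ and scaling by $d$, to the orbit of $1$ in $\mathbb{Z}_{L/d}$, which yields both the length $c_{L/d}$ and the congruence $a\langle 2\rangle_L \cong \langle 2\rangle_{L/d}$, with your final remark on compatibility of the identification $x\sim L-x$ supplying the vertex-by-vertex bookkeeping.
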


We now consider equi-difference CACs with small leave set $\Lambda$.
The main result is as follows.
\begin{theorem}\label{thm:1_leave_condition}
Let $L=\prod_{i=1}^{m}p_i^{r_i}$ be an odd integer, where $p_1<p_2<\ldots<p_m$ are distinct prime factors and each $r_i\in\mathbb{N}$.
There exists an equi-difference code $\set{C}\in\CAC(L,3)$ with leave $\Lambda$ of size $2$, $\Lambda\neq\{\frac{L}{3},\frac{2L}{3}\}$, if one of the followings holds:
\begin{enumerate}[(a)]
\item $p_1>3$ and each $p_i$ satisfies the third condition in Theorem \ref{thm:O_prime} with exactly one exception, say $p_t$, which satisfies $c_{p_t}=\frac{p_t-1}{2}$ and $r_t$=1; or
\item $p_1=3,r_1=1$, and for $i\geq 2$, $p_i$ satisfies the third condition in Theorem~\ref{thm:O_prime} with exactly one exception, say $p_t$, which satisfies $c_{p_t}=e_{p_t}=\frac{p_t-1}{2}$ and $r_t$=1.
\item $p_1=3,r_1=2$, and for $i\geq 2$, $p_i$ satisfies the third condition in Theorem~\ref{thm:O_prime}.
\end{enumerate}
\end{theorem}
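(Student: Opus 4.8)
The plan is to recast the statement in the matching language of the graph $G(L)$ and reduce everything to counting odd cycles. An equi-difference code in $\CAC(L,3)$ corresponds to a set of vertex-disjoint edges of $G(L)$, together with (when $3\mid L$) the loop at $L/3$ representing the exceptional codeword $\{0,L/3,2L/3\}$; an unmatched vertex $v$ contributes exactly the two elements $\{v,L-v\}$ to the leave, and $v=L-v$ is impossible since $L$ is odd. A maximum matching covers every vertex of each even cycle and all but one vertex of each odd cycle, and the loop vertex $L/3$ may be covered by the exceptional codeword; combining this with the displayed formula for $M^e(L,3)$ shows that the minimum possible leave has size $2\big(N_{odd}(L)-\chi(3\mid L)\big)$. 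Hence a code with $|\Lambda|=2$ and $\Lambda\neq\{L/3,2L/3\}$ exists exactly when $N_{odd}(L)=1+\chi(3\mid L)$ and the unique odd cycle other than the loop has length at least $3$: the single uncovered vertex then lies in that long cycle and is automatically distinct from $L/3$, which the loop has absorbed. The whole proof thus reduces to computing $N_{odd}(L)$ under each hypothesis (a)--(c).

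To compute $N_{odd}(L)$ I would invoke Lemma~\ref{lem:cyclelength}: the vertices $v$ with $\gcd(v,L)=L/L'$ split into $\frac{\varphi(L')/2}{c_{L'}}$ cycles, each congruent to $\langle 2\rangle_{L'}$ of length $c_{L'}$, so
\[
N_{odd}(L)=\sum_{\substack{L'\mid L,\ L'>1\\ c_{L'}\text{ odd}}}\frac{\varphi(L')/2}{c_{L'}}.
\]
The key structural claim is that a divisor $L'=\prod_i p_i^{s_i}$ can have $c_{L'}$ odd only if every $c_{p_i^{s_i}}$ is odd. This rests on two facts. First, since $p_i$ is odd, $e_{p_i^{s_i}}/e_{p_i}$ is a power of $p_i$ and hence odd, so $v_2(e_{p_i^{s_i}})=v_2(e_{p_i})$ and the solvability of $2^x\equiv-1$ lifts from $p_i$ to $p_i^{s_i}$; therefore $c_{p_i^{s_i}}$ and $c_{p_i}$ have the same parity. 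Second, if $2^{c}\equiv\pm1\pmod{L'}$ with $c$ odd, then reducing modulo each $p_i^{s_i}$ forces $c_{p_i^{s_i}}\mid c$, so each $c_{p_i^{s_i}}$ is odd. By Theorem~\ref{thm:O_prime} a prime $p$ meets the third condition iff $c_p$ is even, so \emph{any} divisor divisible by such a prime has even $c_{L'}$ and contributes no odd cycle. This prunes the sum so that only divisors built from $3$ and from the exceptional prime $p_t$ survive.

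It then remains to inspect the surviving divisors case by case. In case (a) there is no factor $3$ and $r_t=1$, so the only candidate is $L'=p_t$; as $p_t$ fails the third condition $c_{p_t}$ is odd, and $c_{p_t}=(p_t-1)/2$ yields exactly one cycle, giving $N_{odd}(L)=1=1+\chi(3\mid L)$. In case (c) the surviving divisors are $3$ and $9$ (since $r_1=2$ while every other prime factor has even suborder), with $c_3=1$ and $c_9=3$ both odd, each contributing a single cycle, so $N_{odd}(L)=2=1+\chi(3\mid L)$ and the extra cycle $\langle 2\rangle_9$ has length $3$. Case (b) is the delicate one: the surviving divisors are $3$, $p_t$, and $3p_t$, and I must show $3p_t$ contributes no odd cycle. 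Here Chinese-remainder sign-compatibility enters: for odd $c$ one has $2^{c}\equiv-1\pmod 3$, whereas the hypothesis $c_{p_t}=e_{p_t}$ means $-1\notin\langle 2\rangle\pmod{p_t}$, so the only attainable sign modulo $p_t$ is $+1$. The two required global signs clash, whence $c_{3p_t}$ is even. Thus $N_{odd}(L)=1+1=2=1+\chi(3\mid L)$, the second odd cycle being $\langle 2\rangle_{p_t}$ of length $(p_t-1)/2\geq3$.

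Finally I would assemble the code explicitly: a perfect matching on every even cycle, the loop codeword $\{0,L/3,2L/3\}$ whenever $3\mid L$, and a near-perfect matching leaving a single vertex $v$ of the unique long odd cycle uncovered. The resulting equi-difference code has leave exactly $\{v,L-v\}$, so $|\Lambda|=2$ and $\Lambda\neq\{L/3,2L/3\}$, as required. The main obstacle is the middle step---establishing the parity/sign dichotomy for composite divisors, and in particular the sign clash that eliminates $3p_t$ in case (b). This is precisely the phenomenon that forces the hypotheses to differ: the stronger requirement $c_{p_t}=e_{p_t}$ in (b), versus merely $c_{p_t}=(p_t-1)/2$ in (a), is what guarantees the sign incompatibility, while the exponent $r_1=2$ in (c) supplies the second odd cycle through $\langle 2\rangle_9$.
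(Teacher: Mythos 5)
Your proposal is correct and follows essentially the same route as the paper: reduce existence to the condition $N_{odd}(L)=1+\chi(3\mid L)$ via maximum matchings in $G(L)$ (a step the paper asserts in one line and you spell out), count odd cycles divisor-by-divisor through Lemma~\ref{lem:cyclelength}, and then verify in cases (a)--(c) that only the expected divisors have odd suborder. The only notable difference is case (b): you rule out an odd $c_{3p_t}$ by a direct CRT sign clash (an odd exponent forces $2^c\equiv-1 \pmod 3$, hence $-1$ mod $p_t$, impossible since $c_{p_t}=e_{p_t}$), whereas the paper makes the same reduction mod $p_t$ to conclude $c_{3p_t}=e_{3p_t}$ and then gets evenness from $e_3=2\mid e_{3p_t}$ --- the same underlying idea, packaged slightly differently.
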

\begin{proof}
There exists such a code if and only if (i) $N_{odd}(L)=1$ and $3\nmid L$ or (ii) $N_{odd}(L)=2$ and $3|L$.
In the following we shall prove that conditions (a) implies (i) and conditions (b) and (c) imply (ii).

(a)$\Rightarrow$(i):
Let $k$ be a factor of $L$.
We first claim that $c_k \text{ is odd if and only if } k=p_t$.
It it clear that $c_{p_t}$ is odd.
Assume that $k$ is a multiple of some prime factor $p\neq p_t$.
Since $2^{e_k}\equiv$ (mod $k$) implies $2^{e_k}\equiv$ (mod $p$), we have $e_p|e_k$.
Suppose to the contrary that $c_k$ is odd.
By Lemma~\ref{lem:cyclelength}(2), $\frac{k}{p}\langle 2\rangle_k\cong\langle 2\rangle_p$.
This implies that $c_p$ is odd, which contradicts to $N_{odd}(p)=0$.

Since each cycle in $G(L)$ can be written as the form $a\langle 2\rangle_L$, where $a$ is an integer in its cycle.
Lemma~\ref{lem:cyclelength}(2) says that $a\langle 2\rangle_L\cong\langle 2\rangle_{\frac{L}{d}}$ where $d=\gcd(a,L)$, then the length of $a\langle 2\rangle_L$ is odd only when $a=\frac{L}{p_t}$.
Hence, $N_{odd}=1$.

(b)$\Rightarrow$(ii):
Let $k$ be a factor of $L$.
Similar to above argument, $c_k$ is even if $k$ is a multiple of some prime factor $p\neq 3, p_t$; and, $c_k$ is odd if $k=3$ or $p_t$.
Therefore, it suffices to claim that $c_{3p_t}$ is even.
We shall prove a stronger property that
$$c_{3p_t}=e_{3p_t}=p_t-1.$$
Note that $c_n=\frac{e_n}{2}$ if and only if $2^a\equiv -1$ (mod $n$) for some $a$.
Suppose to the contrary that $c_{3p_t}=\frac{e_{3p_t}}{2}$.
Then $2^a\equiv -1$ (mod $3p_t$) for some $a$.
This implies that $2^a\equiv -1$ (mod $p_t$) and thus $c_{p_t}=e_{p_t}/2$, a contradiction to the original assumption.
So, we have $c_{3p_t}=e_{3p_t}$.
In addition, $e_3|e_{3p_t}$ and $e_{p_t}|e_{3p_t}$ imply that $(p_t-1)|e_{3p_t}$.
By Lemma~\ref{lem:cyclelength}(1), $c_{3p_t}$ divides $\varphi(3p_t)/2$, we have
$$c_{3p_t}=e_{3p_t}=\frac{\varphi(3p_t)}{2}=p_t-1.$$
This completes the second case.

(c)$\Rightarrow$(ii):
Notice that $\frac{L}{3}\langle 2\rangle_L$ and $\frac{L}{9}\langle 2\rangle_L$ are two odd cycles in $G(L)$.
Then the result follows from above arguments.
\qed
\end{proof}

A \emph{safe prime} is a prime number $p$ such that $\frac{p-1}{2}$ is also a prime.
It is easy to see that $c_p=\frac{p-1}{2}$ if $p$ is a safe prime.
Moreover, if $p\equiv 7$ (mod $8$), then $c_p=e_p=\frac{p-1}{2}$ (by the first condition in Theorem~\ref{thm:O_prime}).
The following result is derived from Proposition~\ref{pro:optimal_with_leave} and Theorem~\ref{thm:1_leave_condition}.

\begin{corollary}\label{cor:L_no_3factor}
Let $L>3$ be an odd integer.
Then if $3\nmid L$ we have
\begin{enumerate}
  \item $M(L,3)=M^e(L,3)=\frac{L-1}{4}$ if $p\equiv 5$ (mod $8$) for every prime factor $p$;
  \item $M(L,3)=M^e(L,3)=\frac{L-3}{4}$ if there exists exactly one safe prime factor $\hat{p}$ with $\hat{p}^2\nmid L$ and $p\equiv 5$ (mod $8$) for any other prime factor $p$. \end{enumerate}
If $3|L$ and $9\nmid L$, then we have
\begin{enumerate}
\setcounter{enumi}{2}
\item $M(L,3)=M^e(L,3)=\frac{L+1}{4}$ if $p\equiv 5$ (mod $8$) for every prime factor $p$;
  \item $M(L,3)=M^e(L,3)=\frac{L-1}{4}$ if there exists exactly one safe prime factor $\hat{p}\equiv 7$ (mod $8$) with $\hat{p}^2\nmid L$, and $p\equiv 5$ (mod $8$) for any other prime factor $p>3$.
\end{enumerate}
If $9|L$ and $27\nmid L$, then we have
\begin{enumerate}
\setcounter{enumi}{4}
\item $M(L,3)=M^e(L,3)=\frac{L-1}{4}$ if $p\equiv 5$ (mod $8$) for every prime factor $p>3$.
\end{enumerate}
\end{corollary}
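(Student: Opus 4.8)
The plan is to read all five equalities off Proposition~\ref{pro:optimal_with_leave}: for each item I would exhibit an equi-difference code $\C\in\CAC(L,3)$ whose leave $\Lambda$ satisfies $|\Lambda|<4$ and $\{L/3,2L/3\}\not\subset\Lambda$, which at once forces $|\C|=M^e(L,3)=M(L,3)$. Such a code is handed to me by Theorem~\ref{thm:tight_codition} when a tight code ($\Lambda=\emptyset$) is wanted and by Theorem~\ref{thm:1_leave_condition} when a code with $|\Lambda|=2$ is wanted. The real content is translating the hypotheses, phrased through safe primes and residues modulo $8$, into the multiplicative-suborder conditions of those theorems, and then computing $M^e(L,3)$.

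First I would record the facts about a safe prime $\hat p>3$ that drive the translation. By Lemma~\ref{lem:cyclelength}(1) the suborder $c_{\hat p}$ divides $\varphi(\hat p)/2=\tfrac{\hat p-1}{2}$; since $\tfrac{\hat p-1}{2}$ is prime and $c_{\hat p}=1$ is impossible for $\hat p>3$, we get $c_{\hat p}=\tfrac{\hat p-1}{2}$. For $\hat p$ to be a genuine exception to the third condition of Theorem~\ref{thm:O_prime} this suborder must be odd, i.e. $\hat p\equiv 3\pmod 4$ (equivalently $\hat p>5$). If moreover $\hat p\equiv 7\pmod 8$, then $2$ is a quadratic residue modulo $\hat p$, so $e_{\hat p}\leq\tfrac{\hat p-1}{2}=c_{\hat p}$ and hence $e_{\hat p}=c_{\hat p}=\tfrac{\hat p-1}{2}$, matching the first line of Theorem~\ref{thm:O_prime}. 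Finally, a prime $p\equiv 5\pmod 8$ satisfies the third condition, i.e. $N_{odd}(p)=0$.

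With this dictionary the cases align cleanly. Items (1) and (3) invoke Theorem~\ref{thm:tight_codition}: (1) is condition (a) ($3\nmid L$, every prime factor of the third type) and (3) is condition (b) ($3|L$, $9\nmid L$, every prime factor other than $3$ of the third type); each yields a tight code, so $\Lambda=\emptyset$ and Proposition~\ref{pro:optimal_with_leave} applies directly. Items (2), (4), (5) invoke Theorem~\ref{thm:1_leave_condition}: (2) is condition (a), the unique safe prime $\hat p$ with $\hat p^2\nmid L$ being the exception with $c_{\hat p}=\tfrac{\hat p-1}{2}$ and exponent $1$; (4) is condition (b), where additionally $\hat p\equiv 7\pmod 8$ supplies $e_{\hat p}=c_{\hat p}=\tfrac{\hat p-1}{2}$; and (5) is condition (c), $9|L$ and $27\nmid L$. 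Each of these three produces a code with $|\Lambda|=2$ and $\Lambda\neq\{L/3,2L/3\}$; since two distinct $2$-element sets cannot contain one another, $\{L/3,2L/3\}\not\subset\Lambda$, so Proposition~\ref{pro:optimal_with_leave} again applies.

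It remains to evaluate the common value $|\C|=M^e(L,3)$. Because a CAC has pairwise-disjoint difference sets, $\sum_{\set{I}\in\C}|d^*(\set{I})|=L-1-|\Lambda|$, and by \eqref{eq:12} each equi-difference codeword contributes $4$, except the exceptional codeword $\{0,L/3,2L/3\}$ (possible only when $3|L$), which contributes $2$. For odd $L$ the difference class $\{L/3,2L/3\}$ can arise only from that exceptional codeword, so $\{L/3,2L/3\}\not\subset\Lambda$ guarantees it is present precisely when $3|L$; hence the number of exceptional codewords equals $\chi(3|L)$. Writing $x$ for the number of the remaining codewords, I solve $2\chi(3|L)+4x=L-1-|\Lambda|$ and read off $|\C|=\chi(3|L)+x$; substituting $|\Lambda|=0$ for (1) and (3) and $|\Lambda|=2$ for (2), (4), (5) returns the five stated values. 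A short integrality check, using $p\equiv 5\pmod 8\Rightarrow p\equiv 1\pmod 4$ together with $3\equiv\hat p\equiv 3\pmod 4$ for a safe prime $\hat p>5$, confirms each fraction is an integer. I expect the main obstacle to be exactly the suborder dictionary of the second paragraph---verifying that ``safe prime'' and the stated residues modulo $8$ translate precisely into the exception clauses of Theorems~\ref{thm:tight_codition} and \ref{thm:1_leave_condition}; once that is settled the remainder is bookkeeping.
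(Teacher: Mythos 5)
Your proposal is correct and takes essentially the same route as the paper, which derives this corollary precisely by combining the safe-prime facts $c_{\hat p}=\frac{\hat p-1}{2}$ (and $e_{\hat p}=c_{\hat p}$ when $\hat p\equiv 7 \pmod 8$) with Proposition~\ref{pro:optimal_with_leave}, Theorem~\ref{thm:tight_codition} and Theorem~\ref{thm:1_leave_condition}. Your write-up merely makes explicit what the paper leaves tacit: the codeword count $2\chi(3|L)+4x=L-1-|\Lambda|$ yielding the five values, and the observation that $\hat p$ must be a genuine exception to the third condition of Theorem~\ref{thm:O_prime} (forcing $\hat p>5$, i.e.\ $\hat p\equiv 3\pmod 4$), which is the intended reading of the statement.
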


\medskip

{\em Remark:}
Levenshtein and Tonchev~\cite[Theorem 7]{LT05} proved that for odd primes $L$ and $p$, $M(L,3)=\frac{L-1}{4}$ if $L=4p+1$ and $M(L,3)=\frac{L-3}{4}$ if $L=2p+1$.
These two results can be obtained from Corollary~\ref{cor:L_no_3factor} (i) and (ii).

\smallskip
By Theorem~\ref{thm:upperbound}, Theorem~\ref{thm:construction} and Corollary~\ref{cor:L_no_3factor}, we have the following results.

\begin{corollary}\label{cor:SCAC_odd}
Let $L$ be an even integer.
Then we have
\begin{enumerate}
  \item $M_S(L,3)=(L-2)/8$ if $6 \nmid L$ and $L/2$ satisfies the condition of (i) in Corollary~\ref{cor:L_no_3factor};
  \item $M_S(L,3)=(L-6)/8$ if $6 \nmid L$ and $L/2$ satisfies the condition of (ii) in Corollary~\ref{cor:L_no_3factor};
  \item $M_S(L,3)=(L+2)/8$ if $6 | L$, $18 \nmid L$ and $L/2$ satisfies the condition of (iii) in Corollary~\ref{cor:L_no_3factor};
  \item $M_S(L,3)=(L-2)/8$ if $6 | L$, $18 \nmid L$ and $L/2$ satisfies the condition of (iv) in Corollary~\ref{cor:L_no_3factor};
  \item $M_S(L,3)=(L-2)/8$ if $18 | L$, $54 \nmid L$ and $L/2$ satisfies the condition of (v) in Corollary~\ref{cor:L_no_3factor}.
\end{enumerate}
\end{corollary}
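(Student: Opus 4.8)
The plan is to sandwich $M_S(L,3)$ between the lower bound furnished by the doubling construction and the upper bound of Theorem~\ref{thm:upperbound}, and to verify that the two coincide in each of the five cases. Throughout, write $L'=L/2$; since every case invokes Corollary~\ref{cor:L_no_3factor} on $L'$, the integer $L'$ is odd, so $L\equiv 2\pmod{4}$.

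The lower bound is uniform across all cases. Applying Theorem~\ref{thm:construction} with length $L'$, any optimal $\C\in\CAC(L',3)$ produces a code in $\SCAC(2L',3)=\SCAC(L,3)$ of the same cardinality, whence $M_S(L,3)\geq M(L',3)$. Reading $M(L',3)$ off the matching clause of Corollary~\ref{cor:L_no_3factor} gives $(L'-1)/4,\,(L'-3)/4,\,(L'+1)/4,\,(L'-1)/4,\,(L'-1)/4$ in cases (1)--(5) respectively; substituting $L'=L/2$ turns these into the five claimed values $(L-2)/8,\,(L-6)/8,\,(L+2)/8,\,(L-2)/8,\,(L-2)/8$. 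Hence it remains only to establish the matching upper bounds.

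For the upper bound I would pin down $L\bmod 24$ from the factorization of $L'$ and then feed it into Theorem~\ref{thm:upperbound}. The residue of $L$ modulo $3$ is immediate from whether $6\mid L$ and whether $18\mid L$. For the residue modulo $8$ I would use three facts: every prime $\equiv 5\pmod{8}$, and every power of such a prime, is $\equiv 1\pmod{4}$; the prime $3$ (appearing to the first power) is $\equiv 3\pmod{8}$ whereas $9\equiv 1\pmod{8}$; and the safe-prime exception $\hat p$ occurring in cases (2) and (4) satisfies $\hat p\equiv 3\pmod{4}$. Running these through the Chinese Remainder Theorem yields $L\equiv 2$ or $10$, $14$ or $22$, $6$, $18$, $18\pmod{24}$ in cases (1)--(5), and Theorem~\ref{thm:upperbound} then returns exactly $(L-2)/8,\,(L-6)/8,\,(L+2)/8,\,(L-2)/8,\,(L-2)/8$, matching the lower bounds.

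The one step needing genuine argument is the claim $\hat p\equiv 3\pmod{4}$ used in case (2) (in case (4) it is built into the hypothesis $\hat p\equiv 7\pmod{8}$). I would note that $\hat p$ must be the genuine exceptional prime of Theorem~\ref{thm:1_leave_condition}(a), so $\hat p\not\equiv 5\pmod{8}$: were $\hat p\equiv 5\pmod{8}$, every prime factor of $L'$ would be $\equiv 5\pmod{8}$ and we would be in case (1) with $M(L',3)=(L'-1)/4$. Since $\hat p$ is a safe prime different from $5$ and from $3$ (the latter because $3\nmid L'$ in this case), we have $\hat p>5$, hence $\hat p=2q+1$ for an odd prime $q$ and therefore $\hat p\equiv 3\pmod{4}$. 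With this residue secured, the remaining work is the routine modular bookkeeping sketched above, completing all five cases; the finitely many $L<18$ are covered by $M_S(L,3)=1$.
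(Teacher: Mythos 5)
Your proposal is correct and follows exactly the paper's (essentially unwritten) argument: the paper justifies this corollary in one line by combining Theorem~\ref{thm:upperbound} (upper bound), Theorem~\ref{thm:construction} (lower bound via doubling $M_S(L,3)\geq M(L/2,3)$), and Corollary~\ref{cor:L_no_3factor}, which is precisely your sandwich. Your residue computations modulo $24$ are accurate in all five cases, and your observation that the safe prime $\hat p$ in case (2) must satisfy $\hat p\equiv 3\pmod 4$ (ruling out $\hat p=5$ as the ``genuine exception'') is a worthwhile detail that the paper leaves implicit but that is indeed needed for the bounds to match.
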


\section{Conclusion} \label{sec:conclusion}
We establish in Theorem~\ref{thm:upperbound} upper bounds on the size of SCAC of even length and weight three, which improve previously known upper bounds in~\cite{SCAC}.
The new bounds all increase approximately with slope $1/8$ as a function of length $L$.
By constructing SCACs with some optimal CACs, we show the obtained upper bounds are tight in several cases, as stated in Corollary~\ref{cor:SCAC_even1} and Corollary~\ref{cor:SCAC_odd}.
In addition, some new optimal CACs are given in Theorem~\ref{thm:1_leave_condition}.

\bigskip
\noindent {\bf Acknowledgments} \quad
The authors would like to express their gratitude to the referees for their helpful comments in improving the presentation of this paper.
This work was supported by the Hong Kong RGC Earmaked Grant CUHK414012, the National Natural Science Foundation of China (No. 61301107 and 61174060), the Shenzhen Knowledge Innovation Program JCYJ20130401-172046453 and the Specialized Research Fund for the Doctoral Program of Higher Education of China (No. 20133219120010).


\end{document}